\newcommand{\lfam}{\mathscr{L}}
\newcommand{\dollar}{\texttt{\$}}
\newcommand{\border}{\texttt{\#}}
\newcommand{\valc}{\textrm{VALC}}
\newcommand{\invalc}{\textrm{INVALC}}
\newcommand{\iddhpda}{\textsf{ID2hPDA}}
\newcommand{\dhpda}{\textsf{2hPDA}}
\newcommand{\ndhpda}{\textsf{ndet-2hPDA}}
\newcommand{\doiddhpda}{\textsf{double-ID2hPDA}}
\newcommand{\ddoiddhpda}{\textsf{det-}\doiddhpda}
\newcommand{\ndoiddhpda}{\textsf{ndet-}\doiddhpda}
\newcommand{\diddhpda}{\textsf{det-}\iddhpda}
\newcommand{\niddhpda}{\textsf{ndet-}\iddhpda}
\newcommand{\npda}{\textsf{NPDA}}
\newcommand{\dpda}{\textsf{DPDA}}
\newcommand{\squareforqed}{$\Box$}
\newcommand{\squareforeoe}{$\blacksquare$}
\newcommand{\eoe}{\ifmmode\squareforeoe\else{\unskip\nobreak\hfil
\penalty50\hskip1em\null\nobreak\hfil\squareforeoe
\parfillskip=0pt\finalhyphendemerits=0\endgraf}\fi}
\newcommand{\qed}{\ifmmode\squareforqed\else{\unskip\nobreak\hfil%
  \penalty50\hskip1em\null\nobreak\hfil\squareforqed%
  \parfillskip=0pt\finalhyphendemerits=0\endgraf}\fi}
\theoremstyle{plain} 
 \newtheorem{definition}{Definition}
 \newtheorem{lemma}[definition]{Lemma}
 \newtheorem{theorem}[definition]{Theorem}
 \newtheorem{corollary}[definition]{Corollary}
 \newtheorem{example}[definition]{Example}
\newenvironment{proof}{\noindent{\textbf{Proof}\ }}{\qed\medskip}
\title{Input-Driven Double-Head Pushdown Automata}
\author{Markus Holzer, Martin Kutrib, Andreas Malcher, Matthias Wendlandt
\institute{Institut f\"ur Informatik, Universit\"at Giessen,\\
  Arndtstr.~2, 35392 Giessen, Germany} 
\email{$\{$holzer,kutrib,malcher,matthias.wendlandt$\}$@informatik.uni-giessen.de}
}
\begin{document}

\maketitle

\begin{abstract}
  We introduce and study input-driven deterministic and
  nondeterministic double-head pushdown automata. A double-head
  pushdown automaton is a slight generalization of an ordinary
  pushdown automaton working with two input heads that move in
  opposite directions on the common input tape. In every step one head
  is moved and the automaton decides on acceptance if the heads
  meet. Demanding the automaton to work input-driven it is required
  that every input symbol uniquely defines the action on the pushdown
  store (push, pop, state change). Normally this is modeled by a
  partition of the input alphabet and is called a
  \emph{signature}. Since our automaton model works with two heads
  either both heads respect the same signature or each head owes its
  own signature. This results in two variants of input-driven
  double-head pushdown automata. The induced language families on
  input-driven double-head pushdown automata are studied from the
  perspectives of their language describing capability, their closure
  properties, and decision problems.
\end{abstract}

\section{Introduction}
\label{sec:intro}

Input-driven pushdown automata were introduced
in~\cite{mehlhorn:1980:pmradcflr} in the course of deterministic
context-free language recognition by using a pebbling strategy on the
mountain range of the pushdown store. The idea on input driven
pushdown automata is that the input letters uniquely determine whether
the automaton pushes a symbol, pops a symbol, or leaves the pushdown
unchanged. The follow-up papers~\cite{braunmuehl:1983:idlrlog} and~\cite{dymond:1988:idllognd} studied
further properties of the family of input-driven pushdown languages.
One of the most important properties on input-driven pushdown
languages is that deterministic and nondeterministic automata are
equally powerful. Moreover, the language family accepted is closed
under almost all basic operations in formal language theory. Although
the family of input-driven pushdown languages is a strict subset of
the family of deterministic context-free languages, the input-driven
pushdown languages are still powerful enough to describe important
context-free-like structures and moreover share many desirable
properties with the family of regular languages. These features turned
out to be useful in the context of program analysis and led to a
renewed interest~\cite{Alur:2009:answ} on input-driven pushdown languages
about ten years ago. 
In~\cite{Alur:2009:answ} an alternative
name for input-driven pushdown automata and languages was coined,
namely visibly pushdown automata and languages. Sometimes
input-driven pushdown languages are also called nested word
languages. Generally speaking, the revived research on input-driven
pushdown languages triggered the study of further input-driven
automata types, such as input-driven variants of, e.g., (ordered)
multi-stack automata~\cite{carotenuto:2016:omsvpda}, stack automata~\cite{bensch:2012:idsa:proc}, queue
automata~\cite{kutrib:2015:idqaftdcp}, etc.

We contribute to this list of input-driven devices, by introducing and
studying input-driven double-head pushdown automata. Double-head
pushdown automata were recently introduced
in~\cite{Nagy:2015:afothpa}.\footnote{Originally these devices were named two-head
  pushdown automata in~\cite{Nagy:2015:afothpa}, but since this naming may cause
  confusion with multi-head pushdown automata of~\cite{harrison:1968:mtmhpda}, we use
  to refer to them as double-head pushdown automata instead.}  Instead
of reading the input from left to right as usual, in a double-head
pushdown automata the input is processed from the opposite ends of the
input by double-heads, and the automaton decides on acceptance when
the two heads meet. Thus, double-head pushdown automata are a straight
forward generalization of Rosenberg's double-head finite automata for
linear context-free languages~\cite{rosenberg:1967:mrlcfl}---see
also~\cite{nagy:2012:cothfall}.  The family of double-head
nondeterministic pushdown languages is a strict superset of the family
of context-free languages and contains some linguistically important
non-context-free languages. In fact, the family of double-head
nondeterministic pushdown languages forms is a mildly
context-sensitive language family because in addition to the
aforementioned containment of important languages, the word problem of
double-head nondeterministic pushdown languages remains solvable in
deterministic polynomial time as for ordinary pushdown automata.
Moreover, every double-head nondeterministic pushdown language is
semi-linear. Double-head pushdown automata are a moderate extension of
ordinary pushdown automata because languages accepted by double-head
pushdown automata still satisfy an iteration or pumping lemma. Thus,
double-head pushdown automata and the properties of their accepted languages 
are interesting objects to study.

In the next section we introduce the necessary notations on
double-head pushdown automata and their input-driven versions.
Demanding the automaton to work input-driven it is required that every
input symbol uniquely defines the action on the pushdown store (push,
pop, state change). Normally this is modeled by a partition of the
input alphabet and is called a \emph{signature}. Since our automaton
model works with two heads either both heads respect the same
signature or each head owes its own signature. This results in
(simple) input-driven and double input-driven double-head pushdown
automata. Then in Section~\ref{sec:power} we investigate the
computational capacity of (double) input-driven double-head pushdown
automata. We show that nondeterministic machines are more powerful
than deterministic ones, for both input-driven variants. Moreover, it
turns out that the language families in question are incomparable to
classical language families such as the growing context-sensitive
languages, the Church-Rosser languages, and the context-free
languages.  As a byproduct we also separate the original language
families of double-head deterministic and double-head nondeterministic
pushdown languages. Section~\ref{sec:closure} is then devoted to the
closure properties of the families of input driven
double-head pushdown languages and finally in Section~\ref{sec:dec} we
consider decision problems for the language families in question. Here
it is worth mentioning that although some problems are already not
semidecidable even for deterministic machines, the question of whether
a given deterministic input-driven double-head pushdown automaton~$M$
is equivalent to a given regular language is decidable. In contrast, the
decidability of this question gets lost, if~$M$ is a nondeterministic
input-driven double-head pushdown machine. We have to leave
open the status of some decision problems such as equivalence and
regularity. This is subject to further research.

\section{Preliminaries}
\label{sec:prelim}

Let~$\Sigma^*$ denote the set of all words over the finite
alphabet~$\Sigma$.  The \emph{empty word} is denoted by~$\lambda$, and
$\Sigma^+ = \Sigma^* \setminus \{\lambda\}$.  
For convenience, throughout the paper we use~$\Sigma_\lambda$ for
$\Sigma\cup\{\lambda\}$.  The set of words of length at most $n\geq
0$ is denoted by $\Sigma^{\leq n}$.
The \emph{reversal} of a word~$w$ is denoted by~$w^R$.  For the
\emph{length} of~$w$ we write~$|w|$. For the number of occurrences of
a symbol~$a$ in~$w$ we use the notation~$|w|_a$.  \emph{Set inclusion} is
denoted by~$\subseteq$ and \emph{strict set inclusion} by $\subset$.  We
write~$2^{S}$ for the power set and~$|S|$ for the cardinality of a
set~$S$.

A \emph{double-head pushdown automaton} is a pushdown automaton that is equipped with 
two read-only input heads that move in opposite directions on a common input
tape. In every step one head is moved.
The automaton halts when the heads would pass each other.

A pushdown automaton is called \emph{input-driven} if the input
symbols currently read define the next action on the pushdown store,
that is, pushing a symbol onto the pushdown store, popping a symbol
from the pushdown store, or changing the state without modifying the
pushdown store. To this end, we assume the input alphabet~$\Sigma$
joined with~$\lambda$ to be partitioned into the sets $\Sigma_N$,
$\Sigma_D$, and~$\Sigma_R$, that control the actions state change only
($N$), push ($D$), and pop ($R$).

Formally, a \emph{nondeterministic input-driven double-head pushdown automaton} 
($\niddhpda$) is a system 
\mbox{$M=\langle Q,\Sigma, \Gamma, q_0, F,\bot, \delta_D, \delta_R, \delta_N\rangle$,} 
where
$Q$ is the finite set of \emph{states},
$\Sigma$ is the finite set of \emph{input symbols} partitioned into the sets 
$\Sigma_D$, $\Sigma_R$, and $\Sigma_N$,
$\Gamma$ is the finite set of \emph{pushdown symbols},
$q_0 \in Q$ is the \emph{initial state},
$F\subseteq Q$ is the set of \emph{accepting states}, 
$\bot \notin \Gamma$ is the \emph{empty pushdown symbol},
$\delta_D$ is the partial transition function mapping from
  $Q \times (\Sigma_D \cup \{\lambda\})^2 \times (\Gamma \cup\{\bot\})$ 
  to $2^{Q \times \Gamma}$,
$\delta_R$ is the partial transition function mapping from
  $Q  \times (\Sigma_R \cup \{\lambda\})^2 \times (\Gamma \cup\{\bot\})$ to~$2^Q$,
$\delta_N$ is the partial transition function mapping from
  \mbox{$Q \times (\Sigma_N \cup \{\lambda\})^2 \times (\Gamma \cup\{\bot\})$}
 to~$2^Q$,
where all transition functions are defined only if the second or third
argument is~$\lambda$, and none of the transition functions is defined for
$Q \times \{\lambda\}^2 \times (\Gamma \cup\{\bot\})$.

A \emph{configuration} of an $\niddhpda$
$M=\langle Q,\Sigma, \Gamma, q_0, F,\bot, \delta_D, \delta_R, \delta_N\rangle$
is a triple $(q,w,s)$, where $q \in Q$
is the current state, $w\in \Sigma^*$ is the unread part of the input, and
$s \in \Gamma^*$ denotes the current pushdown content, where the leftmost symbol
is at the top of the pushdown store. 
The \emph{initial configuration} for an input string $w$ is set to $(q_0, w, \lambda)$.
During the course of its computation, $M$ runs through a
sequence of configurations. One step from a configuration to its
successor configuration is denoted by~$\vdash$.
Let $a \in \Sigma$, $w \in \Sigma^*$, $z'\in\Gamma$, $s \in \Gamma^*$,
and $z=\bot$ if $s=\lambda$ and $z=z_1$ if $s=z_1s_1 \in \Gamma^+$. We set
\begin{enumerate}
\item 
  $(q,aw,s) \vdash (q',w,z's)$, 
  if $a \in \Sigma_D$ and $(q',z') \in \delta_D(q,a,\lambda,z)$,
\item 
  $(q,wa,s) \vdash (q',w,z's)$, 
  if $a \in \Sigma_D$ and $(q',z') \in \delta_D(q,\lambda,a,z)$,
\item 
  $(q,aw,s) \vdash (q',w,s')$, 
  if $a \in \Sigma_R$ and $q' \in \delta_R(q,a,\lambda,z)$,\\ where
  $s'=\lambda$ if $s=\lambda$ and $s'=s_1$ if $s=z_1s_1 \in \Gamma^+$.
\item 
  $(q,wa,s) \vdash (q',w,s')$, 
  if $a \in \Sigma_R$ and $q' \in \delta_R(q,\lambda,a,z)$,\\ where
  $s'=\lambda$ if $s=\lambda$ and $s'=s_1$ if $s=z_1s_1 \in \Gamma^+$.
\item 
  $(q,aw,s) \vdash (q',w,s)$, 
  if $a \in \Sigma_N$ and $q' \in \delta_N(q,a,\lambda,z)$,
\item 
  $(q,wa,s) \vdash (q',w,s)$, 
  if $a \in \Sigma_N$ and $q' \in \delta_N(q,\lambda,a,z)$,
\end{enumerate}
So, whenever the pushdown store is empty, the successor configuration is computed
by the transition functions with the special empty pushdown symbol $\bot$, and
at most one head is moved.
As usual, we define the reflexive and transitive closure of $\vdash$ by
$\vdash^*$.
The language accepted by the $\niddhpda$~$M$ is the set $L(M)$ of words for which
there exists some computation beginning in the initial
configuration and halting in a configuration in which the whole input
is read and an accepting state is entered. Formally:
$$
L(M) = \{\,w \in \Sigma^* \mid (q_0, w, \lambda) \vdash^*
(q, \lambda, s) \text{ with } q \in F,\\ s \in \Gamma^* \,\}.
$$

The partition of an input alphabet into the sets 
$\Sigma_D$, $\Sigma_R$, and $\Sigma_N$ is called a \emph{signature}.
We also consider input-driven double-head pushdown automata, where
each of the two heads may have its own signature. To this end,
we provide the signatures $\Sigma_{D,l}$, $\Sigma_{R,l}$, and $\Sigma_{N,l}$
as well as $\Sigma_{D,r}$, $\Sigma_{R,r}$, and $\Sigma_{N,r}$ and
require for \emph{double input-driven double-head pushdown automata}
($\doiddhpda$) that they obey the first signature whenever the
left head is moved and the second signature whenever the right
head is moved.
 
If there is at most one choice of action for any possible configuration, 
we call the given (double) input-driven double-head pushdown automaton
\emph{deterministic} ($\diddhpda$ or $\ddoiddhpda$). 

In general, the family of all languages accepted by an automaton of 
some type~$X$ will be denoted by~$\lfam(X)$.

In order to clarify this notion we continue with an example.

\begin{example}\label{exa:gladkij}
The Gladkij language $\{\,w\border w^R \border w \mid w \in \{a,b\}^*\,\}$
is not growing context sensitive~\cite{Buntrock:1998:gcslcrl:art}
and, thus, is neither context free nor
Church-Rosser~\cite{McNaughton:1988:crtsfl}.
The same is true for its marked variant
\[
L_1=\{\,w\border h_1(w)^R \border h_2(w) \mid w \in \{a,b\}^*\,\},
\]
where the homomorphisms $h_1$ and $h_2$ are defined by
$h_1(a)=\bar{a}$, $h_1(b)=\bar{b}$, $h_2(a)=\hat{a}$, and $h_2(b)=\hat{b}$.
However, language $L_1$ is accepted by the $\diddhpda$ 
\[
M=\langle
\{q_0,q_1,q_a,q_b,q_+\},\Sigma_D\cup\Sigma_R\cup\Sigma_N,\{A,B\},q_0,
 \{q_+\}, \bot, \delta_D, \delta_R, \delta_N \rangle,
\]
where $\Sigma_D=\{a,b\}$, $\Sigma_R=\{\bar{a},\bar{b}\}$,
$\Sigma_N=\{\border,\hat{a},\hat{b}\}$, and
the transition functions are defined as follows. Let $X\in\{A,B,\bot\}$.
$$
\begin{array}[t]{rrcl}
 (1) & \delta_D(q_0,a,\lambda,X) &=& (q_0, A)\\
 (2) & \delta_D(q_0,b,\lambda,X) &=& (q_0, B)\\[2mm]
 (3) & \delta_R(q_1,\bar{a},\lambda,A) &=& q_a\\
 (4) & \delta_R(q_1,\bar{b},\lambda,B) &=& q_b\\
\end{array}
\qquad\qquad
\begin{array}[t]{rrcl}
 (5) & \delta_N(q_0,\border,\lambda,X) &=& q_1\\
 (6) & \delta_N(q_a,\lambda,\hat{a},X) &=& q_1\\
 (7) & \delta_N(q_b,\lambda,\hat{b},X) &=& q_1\\
 (8) & \delta_N(q_1,\border,\lambda,\bot) &=& q_+\\
\end{array}
$$
The idea of the construction is as follows. 
In a first phase, $M$ reads and pushes the input prefix 
$w$ (Transitions~1 and~2). On reading the left symbol~$\border$
automaton $M$ enters state $q_1$ which is used in the second phase.
Basically, in the second phase the left and right head are moved
alternately. When the left head is moved, the input symbol read is
compared with the symbol on the top of the pushdown store. If both
coincide, state $q_a$ or $q_b$ is entered to indicate that the right
head has to read symbol $\hat{a}$ or $\hat{b}$ (Transitions~3 and 4).
If the right head
finds the correct symbol, state $q_1$ is entered again (Transitions 6 and 7).
The second phase ends when the left head reads the second symbol~$\border$.
In that case state $q_+$ is entered and $M$ halts
(Transition 8). If in this situation the input has been read entirely and the
pushdown store is empty, clearly, the $w$ pushed in the
first phase has successfully be compared with the factor
$h_1(w)^R$ and the suffix $h_2(w)$. So, the input belongs to $L_1$.
In any other case, $M$ halts without entering the sole accepting state $q_+$.
\eoe
\end{example}

\section{Computational capacity}
\label{sec:power}

In order to explore the computational capacity of input-driven double-head
pushdown automata we first turn to show that nondeterminism is better
than determinism. As witness language for that result we use the
language $L_{dta}=\{\, a^nb^nc^n \dollar_l c^ib^ja^k \mid i,j,k,n\geq 0\,\}
\cup \{\, a^ib^jc^k \dollar_r c^nb^na^n \mid i,j,k,n\geq 0\,\}$.

\begin{lemma}\label{lem:lang-det-not}
The language $L_{dta}$ is not accepted by any deterministic double-head
pushdown automaton.
\end{lemma}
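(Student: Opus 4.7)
The plan is to assume for contradiction that some deterministic double-head pushdown automaton $M$ accepts $L_{dta}$, and to exploit the fact that before the separator has been read $M$ cannot have committed to whether it will eventually have to verify count-equalities in the prefix (the $\dollar_l$-case) or in the suffix (the $\dollar_r$-case).

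First I would fix an $n$ large compared to $M$'s state set and pushdown alphabet and focus on the two members of $L_{dta}$,
\[
u_n = a^n b^n c^n \dollar_l c^n b^n a^n \quad\text{and}\quad v_n = a^n b^n c^n \dollar_r c^n b^n a^n.
\]
Since $M$ is deterministic and $u_n,v_n$ differ only in the symbol at position $3n+1$, the two computations agree step-by-step up to the first move in which some head reads this position. Let $\tau_n$ be that critical moment and let $(q_n,\gamma_n,\pi_n)$ denote the common configuration right before $\tau_n$, where $\pi_n$ is the position of the head that still has to cross the separator.

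Next I would introduce the near-miss inputs $u_n' = a^{n+1} b^n c^n \dollar_l c^n b^n a^n\notin L_{dta}$ (the prefix fails the $\dollar_l$-equality) and $v_n' = a^n b^n c^n \dollar_r c^n b^n a^{n+1}\notin L_{dta}$ (the suffix fails the $\dollar_r$-equality), and ask $M$ to distinguish them starting from configuration $(q_n,\gamma_n,\pi_n)$. A pigeonhole argument on the finite component of the configuration (state together with the top-of-stack symbol) as $n$ varies, followed by a pushdown-style pumping argument inside the $a$-block of $u_n$, would allow one to shorten or lengthen the $a$-block without changing the configuration at $\tau_n$; this produces an input whose prefix no longer has matching $a$- and $b$-counts but which $M$ must still accept along the $\dollar_l$-branch, contradicting the rejection demanded by the $\dollar_l$-case of $L_{dta}$.

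The main obstacle is making the pumping/splicing step rigorous in the two-head setting. Unlike for an ordinary deterministic pushdown automaton, a double-head pushdown automaton interleaves moves of its two heads, so before one can identify a loop in the computation that pumps only the left $a$-block, one has to ensure that the corresponding segment of right-head moves also admits a matching loop or is absent. A clean way to handle this is to prove a localization-friendly pumping lemma for $\ddhpda$ in the spirit of the iteration lemmata for $\dhpda$ from~\cite{Nagy:2015:afothpa}, and then to apply it to $u_n$ with the $a$-block marked, so that the lemma is forced to pump only inside that count-critical region.
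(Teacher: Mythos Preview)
Your proposal has a genuine gap, and you have in fact identified it yourself: the pumping/splicing step is the entire argument, and it is not carried out. Saying that ``a clean way to handle this is to prove a localization-friendly pumping lemma for $\ddhpda$'' is not a proof; it is a wish. The iteration lemma in~\cite{Nagy:2015:afothpa} gives no control over \emph{where} the pump occurs, and no marked (Ogden-type) version for double-head machines is available. Worse, your sketch silently assumes a favourable case split that need not hold. Consider a deterministic $M$ that, on $a^n b^n c^n\dollar_x c^n b^n a^n$, first drives its right head all the way to $\dollar_x$ before the left head reads a single~$a$. Then at your moment $\tau_n$ the left head has not entered the $a$-block at all, so there is no computation segment inside that block to pump; conversely, if the left head crosses $\dollar_x$ first, the entire prefix $a^n b^n c^n$ has already been consumed and any pump there changes the pushdown content $\gamma_n$ you wanted to keep fixed. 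The pigeonhole on $(q_n,\text{top of }\gamma_n)$ across different $n$ does not help either, since the unread portion between the heads and the full pushdown word both vary with~$n$.

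The paper avoids direct pumping on the two-head machine altogether. It does the case split explicitly via the set $R=\{(m,n)\mid\text{right head reaches }\dollar_x\text{ not after the left}\}$, and then \emph{reduces} each case to an ordinary one-head pushdown automaton: if some $R(m)$ were infinite, a deterministic PDA (simulating the fixed prefix $a^m b^m c^m$ in its finite control) would accept an infinite subset of $\{a^j b^j c^j\dollar_r\}$, contradicting the pumping lemma for context-free languages; hence every $R(m)$ is finite, which forces, for every~$m$, some $n$ with the left head reaching the marker first, and from this a nondeterministic PDA (guessing the suffix symbol by symbol) is built that accepts $\{a^j b^j c^j\dollar_l\mid j\ge 0\}$, again impossible. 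The key idea you are missing is this reduction to single-head PDAs, which lets one invoke the classical pumping lemma instead of inventing a new one for the two-head model.
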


\begin{proof}
In contrast to the assertion assume that $L_{dta}$ is accepted by
some deterministic double-head pushdown automaton~$M$.
For all $m,n\geq 0$ and $x\in\{l,r\}$ we consider the input words
$a^mb^mc^m \dollar_x c^nb^na^n$ that belong to $L_{dta}$.
Since $M$ is deterministic, the computations on the words
$a^mb^mc^m \dollar_l c^nb^na^n$ and $a^mb^mc^m \dollar_r c^nb^na^n$
are identical until one of the heads reaches the center marker~$\dollar_x$.
So, we can define the set
$$
R=\{\, (m,n)\mid \text{on input } a^mb^mc^m \dollar_x c^nb^na^n
\text{ the right head of } M \text{ reaches } \dollar_x \text{ not after the
  left head}\,\}.
$$
Thus, the initial part of such a computation is in the form
$(q_0, a^mb^mc^m \dollar_x c^nb^na^n, \lambda)
\vdash^* (q_1, u\dollar_x,z)
\vdash (q_2, u,z')$,
where $q_1,q_2\in Q$, $z,z'\in\Gamma^*$, $u$ is a suffix of
$a^mb^mc^m$, and the last transition applied is of the form
$\delta(q_1,\lambda, \dollar_x,z_1)$. That is, in the last step the right head of~$M$
reads~$\dollar_x$ while seeing $z_1$ on top of the pushdown store, and the pushdown
store content~$z$ is replaced by~$z'$.
Next, the set $R$ is further refined into
$R(m)= \{\, n\mid (m,n)\in R\,\}$, for all $m\geq 0$.
Clearly, we have $R=\bigcup_{m\geq 0} (m,R(m))$.

Now assume that there is an $m\geq 0$ such that $|R(m)|$ is infinite.
We sketch the construction of a deterministic pushdown automaton~$M_1$ that
accepts the language $L_1=\{\,a^jb^jc^j\dollar_r \mid j\in R(m)\,\}$ as follows.
On a given input $a^jb^jc^j\dollar_r$, $M_1$ basically simulates a computation of
$M$ on input $a^mb^mc^m \dollar_r c^jb^ja^j$. Since $m$ is fixed, $M_1$
handles the prefix $a^mb^mc^m$ in its finite control. Moreover, since the 
left head of $M$ reaches the center marker not before the right head,
$M_1$ handles the left head and its moves in the finite control as well.
So, whenever $M$ moves its right head to the left, $M_1$ moves its sole
head to the right. Let~$[u, q]$ denote a state of $M_1$ that says that
$q$ is the simulated state of $M$ and $u$ is the still unprocessed
suffix of the prefix $a^mb^mc^m$. Then the simulation of $M$ is
straightforward: If $M$ performs a computation
$(q_0, a^mb^mc^m \dollar_r c^jb^ja^j, \lambda) 
\vdash (q_1,u\dollar_rv,  z)$, where $u$ is a suffix of
 $a^mb^mc^m$ and $v$ is a prefix of $c^jb^ja^j$, then~$M_1$
performs a computation
$([a^mb^mc^m,q_0], c^jb^ja^j\dollar_r, \lambda) 
\vdash ([u,q_1], (\dollar_rv)^R,  z)$.
When $M_1$ has read the symbol~$\dollar_r$, it continues the simulation
of $M$ with $\lambda$-steps, where now all head movements are handled in the
finite control. Finally, $M_1$ accepts if and only if $M$ accepts.
So, since $M_1$ is a deterministic pushdown automaton, $L_1$ must
be a context-free language. However, since $|R(m)|$ is assumed to
be infinite, language $L_1$ is infinite. A simple application of the
pumping lemma for context-free languages shows that any
infinite subset of $\{\,a^kb^kc^k\dollar_r\mid k\geq 0\,\}$ is
not context free. {F}rom the contradiction we derive that~$|R(m)|$ is
finite, for all $m\geq 0$.

In particular, this means that for every $m\geq 0$ there is at least one
$n\geq 0$ such that $M$ accepts the input $a^mb^mc^m \dollar_l c^nb^na^n$,
whereby the right head reaches the center marker not before the left head.
Based on this fact, we now can construct a nondeterministic pushdown
automaton $M_2$ that accepts the language
$L_2=\{\,a^jb^jc^j\dollar_l \mid j\geq 0\,\}$ as follows.

On a given input $a^jb^jc^j\dollar_l$, $M_2$ simulates a computation of
$M$ on input $a^jb^jc^j \dollar_l c^kb^la^m$, whereby $M_2$ guesses
the suffix $c^kb^la^m$ step-by-step.
So, whenever $M$ moves its left head to the right, $M_2$ moves its sole
head to the right as well. Whenever $M$ moves its right head to the left, 
$M_2$ guesses the next symbol from the suffix.
If $M_2$ guesses the center marker and, thus, in the simulation the
right head would see the center marker before the left head, $M_2$
rejects. If in the simulation the left head sees the center marker
before the right head, the simulation continues with $\lambda$-steps
until $M_2$ guesses that both heads meet.
In this case,~$M_2$ accepts if and only if $M$ accepts.
Considering the language accepted by $M_2$, one sees that
on some input $a^jb^jc^j \dollar_l$, $j\geq 0$, $M_2$ can guess a
suffix $c^nb^na^n$ such that $M$ accepts the input $a^jb^jc^j \dollar_l c^nb^na^n$,
whereby the right head reaches the center marker not before the left head
(it follows from above that such a suffix exists). So, $M_2$ accepts
any word of the form $a^jb^jc^j \dollar_l$, $j\geq 0$.
Conversely, if an input $w\dollar_l$ is not of the form $a^jb^jc^j \dollar_l$, then
there is no computation of $M_2$ that
accepts any word with prefix $w\dollar_l$
(due to the center marker $\dollar_l$, $w$ is verified to have form
$a^jb^jc^j$). Therefore, the simulation cannot ending accepting
and, thus, $M_2$ rejects.

So, since $M_2$ is a pushdown automaton, $L_2$ must
be a context-free language. {F}rom the contradiction we derive 
that~$L_{dta}$ is not accepted by any deterministic double-head 
pushdown automaton.
\end{proof}

The next example shows that the language $L_{dta}$ is accepted even by
input-driven double-head pushdown automata provided that nondeterminism is allowed.

\begin{example}\label{exa:ldta-accepted}
The language $L_{dta}=\{\, a^ib^jc^k \dollar_r c^nb^na^n \mid i,j,k,n\geq 0\,\}
\cup \{\, a^nb^nc^n \dollar_l c^ib^ja^k \mid i,j,k,n\geq 0\,\}$
is accepted by the $\niddhpda$ 
\[
M=\langle
\{s_0,p_1,p_2,\dots, p_7, q_1,q_2,\dots,q_7,s_+\},\Sigma_D\cup\Sigma_R\cup\Sigma_N,\{D,G,A\},s_0,
 \{p_4,q_4,s_+\}, \bot, \delta_D, \delta_R, \delta_N \rangle,
\]
where $\Sigma_D=\{a,\dollar_l,\dollar_r\}$, $\Sigma_R=\{b\}$,
$\Sigma_N=\{c\}$, and
the transition functions are defined as follows. 
In its first step, $M$ guesses whether the input contains a symbol
$\dollar_r$ or a symbol $\dollar_l$. Dependent on the guess one of the two
subsets in the definition of $L_{dta}$ are verified, and $M$ starts to read
the prefix from the left or the suffix from the right.  
The states $p_\ell$ are used
in the case of a $\dollar_r$, the states $q_\ell$ are used in the case of a
$\dollar_l$, and the remaining states are used for both cases. Recall that
$i,j,k$ and $n$ may be zero.
$$
\begin{array}[t]{rrcl}
 (1) & \delta_D(s_0,a,\lambda,\bot) &\ni& (p_1, D)\\
 (2) & \delta_R(s_0,b,\lambda,\bot) &\ni& p_2\\
 (3) & \delta_N(s_0,c,\lambda,\bot) &\ni& p_3\\
 (4) & \delta_D(s_0,\dollar_r,\lambda,\bot) &\ni& (p_4, G)\\[2mm]
\end{array}
\qquad\qquad
\begin{array}[t]{rrcl}
 (5) & \delta_D(s_0,\lambda,a,\bot) &\ni& (q_1, D)\\
 (6) & \delta_R(s_0,\lambda,b,\bot) &\ni& q_2\\
 (7) & \delta_N(s_0,\lambda,c,\bot) &\ni& q_3\\
 (8) & \delta_D(s_0,\lambda,\dollar_l,\bot) &\ni& (q_4, G)\\[2mm]
\end{array}
$$
We continue to construct the transition functions for the first case,
the construction for the second case is symmetric. So, while $M$ processes
the prefix $a^ib^jc^k$ it has to verify the form of the prefix and has to obey
the action associated to the symbols. The actual pushdown content generated
in this phase does not matter. Therefore, $M$ pushes dummy symbols $D$ and
a special symbol $G$ when it has reached the $\dollar_r$. The form of the
prefix is verified with the help of the states $p_1,p_2$, and $p_3$:
$$
\begin{array}[t]{rrcl}
 (9) & \delta_D(p_1,a,\lambda,D) &\ni& (p_1, D)\\
 (10) & \delta_R(p_1,b,\lambda,D) &\ni& p_2\\
 (11) & \delta_N(p_1,c,\lambda,D) &\ni& p_3\\
 (12) & \delta_D(p_1,\dollar_r,\lambda,D) &\ni& (p_4, G)\\[2mm]
\end{array}
\qquad\qquad
\begin{array}[t]{rrcl}
 (13) & \delta_R(p_2,b,\lambda,D) &\ni& p_2\\
 (14) & \delta_N(p_2,c,\lambda,D) &\ni& p_3\\
 (15) & \delta_D(p_2,\dollar_r,\lambda,D) &\ni& (p_4, G)\\[2mm]
 (16) & \delta_N(p_3,c,\lambda,D) &\ni& p_3\\
 (17) & \delta_D(p_3,\dollar_r,\lambda,D) &\ni& (p_4, G)\\[2mm]
\end{array}
$$
In the next phase, $M$ has to verify the suffix $c^nb^na^n$. To this end,
it moves its both heads alternately where for each but the first $a$ an
$A$ is pushed. For the first $a$ (if it exists) a symbol $G$ is pushed.
So, if both heads, one after the other, arrive at the $b$-sequence, the
number $n$ of $a$'s coincides with the number of $c$'s and the pushdown
content is of the form $A^{n-1}GGD^*$, if $n\geq 1$. If $n=0$, $M$ halts
in the accepting state~$p_4$. If $n\geq 1$, in the final phase the
$b$-sequence is read and its length is compared with the number of
$A$'s at the top of the stack. The final step reads the last $b$ 
and the $G$ from the pushdown store and enters the accepting state $s_+$,
for which no further transitions are defined. So, a correct input
is accepted and an incorrect input is not:
 $$
\begin{array}[t]{rrcl}
 (18) & \delta_D(p_4,\lambda,a,G) &\ni& (p_5, G)\\
 (19) & \delta_N(p_5,c,\lambda,A) &\ni& p_6\\
 (20) & \delta_N(p_5,c,\lambda,G) &\ni& p_6\\[2mm]

 (20) & \delta_D(p_6,\lambda,a,A) &\ni& (p_5,A)\\
 (21) & \delta_D(p_6,\lambda,a,G) &\ni& (p_5,A)\\
\end{array}
\qquad\qquad
\begin{array}[t]{rrcl}
 (20) & \delta_R(p_6,\lambda,b,A) &\ni& p_7\\
 (21) & \delta_R(p_6,\lambda,b,G) &\ni& s_+\\[2mm]

 (22) & \delta_R(p_7,\lambda,b,A) &\ni& p_7\\
 (23) & \delta_R(p_7,\lambda,b,G) &\ni& s_+\\
\end{array}
$$
\eoe
\end{example}

By Lemma~\ref{lem:lang-det-not} and Example~\ref{exa:ldta-accepted}
we conclude the next theorem.

\begin{theorem}\label{theo:ndet-better}
The family $\lfam(\diddhpda)$ is strictly included in the family
$\lfam(\niddhpda)$ and the family $\lfam(\ddoiddhpda)$ is strictly included in the family
$\lfam(\ndoiddhpda)$.
\end{theorem}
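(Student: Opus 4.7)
The plan is to derive both strict inclusions directly from Lemma~\ref{lem:lang-det-not} and Example~\ref{exa:ldta-accepted}, using $L_{dta}$ as the single separating witness for both statements. First I would dispatch the inclusions: by definition, a deterministic automaton is a special case of a nondeterministic one, so $\lfam(\diddhpda)\subseteq\lfam(\niddhpda)$ and $\lfam(\ddoiddhpda)\subseteq\lfam(\ndoiddhpda)$ are immediate from the transition-function definitions, with no real content to verify.

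For strictness in the (simple) input-driven case, I would argue as follows. Example~\ref{exa:ldta-accepted} exhibits a $\niddhpda$ accepting $L_{dta}$, hence $L_{dta}\in\lfam(\niddhpda)$. On the other hand, a $\diddhpda$ is in particular a deterministic double-head pushdown automaton (the input-driven signature is a restriction, not an extension), so Lemma~\ref{lem:lang-det-not} forces $L_{dta}\notin\lfam(\diddhpda)$. Combining these two facts yields $\lfam(\diddhpda)\subsetneq\lfam(\niddhpda)$.

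For strictness in the double input-driven case, the same witness works with only a cosmetic observation. Every $\niddhpda$ can be viewed as a $\ndoiddhpda$ by setting $\Sigma_{D,l}=\Sigma_{D,r}=\Sigma_D$, $\Sigma_{R,l}=\Sigma_{R,r}=\Sigma_R$, and $\Sigma_{N,l}=\Sigma_{N,r}=\Sigma_N$; hence $\lfam(\niddhpda)\subseteq\lfam(\ndoiddhpda)$ and in particular $L_{dta}\in\lfam(\ndoiddhpda)$. Conversely, a $\ddoiddhpda$ is still a deterministic double-head pushdown automaton, so Lemma~\ref{lem:lang-det-not} again gives $L_{dta}\notin\lfam(\ddoiddhpda)$, establishing $\lfam(\ddoiddhpda)\subsetneq\lfam(\ndoiddhpda)$.

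There is no real obstacle here: the heavy lifting already sits in Lemma~\ref{lem:lang-det-not} and Example~\ref{exa:ldta-accepted}. The only point worth stating explicitly is the sanity check that a $\diddhpda$ (resp.\ $\ddoiddhpda$) truly falls under the hypothesis of Lemma~\ref{lem:lang-det-not}, i.e.\ that the signature restriction does not take us outside the class of deterministic double-head pushdown automata; this is clear from the definition since the signatures merely constrain which transitions may be defined on which input symbols.
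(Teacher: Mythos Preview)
Your proposal is correct and matches the paper's own argument exactly: the paper simply states that the theorem follows from Lemma~\ref{lem:lang-det-not} and Example~\ref{exa:ldta-accepted}, and you have spelled out precisely this derivation, including the trivial inclusion of deterministic devices in nondeterministic ones and the observation that both $\diddhpda$s and $\ddoiddhpda$s fall under the scope of Lemma~\ref{lem:lang-det-not}.
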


Next, we compare the computational capacities of $\iddhpda$s and
$\doiddhpda$s, that is, the capacity gained in providing (possibly
different) signatures to each of the heads. As witness language for 
the result that two signatures are better than one we use
the language $L_{ta}=\{\, a^nb^na^n \mid n\geq 1\,\}$.

\begin{example}\label{exa:dta-accepted-double}
The language $L_{ta}=\{\, a^nb^na^n \mid n\geq 1\,\}$
is accepted by the $\ddoiddhpda$ 
\[
M=\langle
\{q_0,q_1,q_2,q_a,q_+\},\Sigma_{D,l}\cup\Sigma_{D,r}\cup\Sigma_{R,l}\cup\Sigma_{R,r}\cup
\Sigma_{N,l}\cup\Sigma_{N,r},\{A,G\},q_0,
 \{q_+\}, \bot, \delta_D, \delta_R, \delta_N \rangle,
\]
where 
$\Sigma_{D,l}=\{a\}$, $\Sigma_{D,r}=\emptyset$,
$\Sigma_{R,l}=\{b\}$, $\Sigma_{R,r}=\{b\}$,
$\Sigma_{N,l}=\emptyset$, $\Sigma_{N,r}=\{a\}$,
and
the transition functions are defined as follows. 
In its first step, $M$ reads an $a$ with its left head and pushes 
the special symbol $G$ into the pushdown store. Subsequently, it reads
an $a$ with its right head while the pushdown store remains unchanged.
Next, $M$ moves its both heads alternately where for each $a$ read by
the left head (in state $q_1$) an $A$ is pushed and for each $a$ read
by the right head (in state $q_a$) the pushdown store remains unchanged.
Let $X\in\{A,G\}$.
$$
\begin{array}[t]{rrcl}
 (1) & \delta_D(q_0,a,\lambda,\bot) &=& (q_a, G)\\
 (2) & \delta_D(q_1,a,\lambda,X) &=& (q_a, A)\\[2mm]

 (3) & \delta_N(q_a,\lambda,a,X) &=& q_1\\
\end{array}
$$
So, after having read $n\geq 1$ symbols $a$ with the left as well as with
the right head, $M$ is in state $q_1$ and the pushdown store contains
the word $A^{n-1}G$. The next phase starts when the right head
reads a $b$ in state~$q_1$. In this phase, the right head is not used.
The left head reads the $b$'s while for each $b$ an $A$ is popped. When
$M$ reads a~$b$ with the special symbol $G$ on top of the
pushdown store, the sole accepting state~$q_+$ is entered.
$$
\begin{array}[t]{rrcl}
 (4) & \delta_R(q_1,b,\lambda,A) &=& q_2\\
 (5) & \delta_R(q_1,b,\lambda,G) &=& q_+\\[2mm]
 (6) & \delta_R(q_2,b,\lambda,A) &=& q_2\\
 (7) & \delta_R(q_2,b,\lambda,G) &=& q_+\\
\end{array}
$$
In this way, clearly, any word from $L_{ta}$ is accepted by $M$. Conversely,
in order to reach its sole accepting state $q_+$, $M$ has to read at least
one $a$ from the left as well as from the right, otherwise it cannot enter
state~$q_1$. Moreover, whenever it reads an $a$ from the left it must
read an $a$ from the right, otherwise it would halt in state $q_a$. 
Since
the transition functions are undefined for input symbol $b$ and
states~$q_0$ and~$q_a$,~$M$ cannot accept without having read a $b$. 
In order to enter the accepting state $q_+$ it must have read as many
$b$'s as $a$'s from the prefix which, in turn, have been read from
the suffix as well. Since the transition functions are undefined for
state $q_+$, $M$ necessarily halts when this state is entered.
If in this situation the input has been read entirely, $M$ accepts.
This implies that any word accepted by $M$ belongs to $L_{ta}$.
\eoe
\end{example}

\begin{lemma}\label{lem:lang-simple-not}
The language $L_{ta}$ is not accepted by any nondeterministic input-driven double-head
pushdown automaton.
\end{lemma}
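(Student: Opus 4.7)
I would argue by contradiction. Suppose $M$ is an $\niddhpda$ accepting $L_{ta}$, and let $\Sigma_D$, $\Sigma_R$, $\Sigma_N$ be the partition of $\{a,b\}$ (together with $\lambda$) coming from its signature. The critical feature of the single-signature setting is that both heads must perform the same kind of stack action on the same input symbol; the plan is to fix a sufficiently large $n$, pick an accepting computation on $a^nb^na^n$, and in every possible case for the signature produce, by a pigeonhole/pumping argument, an accepted word outside $L_{ta}$.

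First I would handle the cases where one of $a$ or $b$ lies in $\Sigma_N$. If $a\in\Sigma_N$, the stack is never touched while any $a$ is read, so the initial phase of the computation (before either head reaches the $b$-block) runs on an empty stack and only the finite state varies; and since before the first $b$-read one of the heads must already have read $n$ $a$'s, there are more than $|Q|$ configurations visited in this phase once $n>|Q|$. Pigeonhole gives two moments with the same state but with left/right $a$-read counts $(i_1,j_1)\ne(i_2,j_2)$; iterating this cycle $\ell$ times extends the run to accept $a^{n+\ell(i_2-i_1)}b^na^{n+\ell(j_2-j_1)}$, and since $(i_2-i_1)+(j_2-j_1)\ge 1$, for $\ell\ge 1$ at least one outer exponent differs from $n$, giving an accepted word not in $L_{ta}$. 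The case $b\in\Sigma_N$ is symmetric: the stack is fixed while any $b$ is read, so I would pump a state cycle inside the middle block to accept some $a^nb^{n+\ell m}a^n$ with $m\ge 1$.

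Next I would dispose of the cases where $a$ and $b$ share a nontrivial class. If $a,b\in\Sigma_R$, every read is an empty-stack pop and the stack remains empty throughout, so $M$ is simulable by a two-head NFA with state set $Q$; if $a,b\in\Sigma_D$, the stack never shrinks and the only stack datum influencing future transitions is the current top, so $M$ is simulable by a two-head NFA with state set $Q\times(\Gamma\cup\{\bot\})$. In both subcases $L(M)$ is accepted by a two-head NFA with inward-moving heads, and hence is (linear) context-free by Rosenberg's classical result; but $L_{ta}$ is not even context-free (a direct pumping-lemma argument on $a^nb^na^n$ shows this), a contradiction.

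The main obstacle is the remaining mixed case, where $\{a,b\}$ is split between $\Sigma_D$ and $\Sigma_R$; take $a\in\Sigma_D$ and $b\in\Sigma_R$, the other assignment being completely symmetric. Now every $b$-read pops a specific earlier $a$-push, so the accepting computation carries a nested matching of pushes and pops. I would apply a visibly-pushdown-style pumping lemma, adapted to the double-head setting, to extract a pumpable pair consisting of a non-empty $a$-reading subsegment and a matched $b$-reading subsegment of equal stack-effect that may be iterated simultaneously without perturbing the rest of the run. Iterating $\ell$ times produces an accepting run on a word of the form $a^{n+\ell k_L}b^{n+\ell p}a^{n+\ell k_R}$ with $k_L+k_R=p\ge 1$; for $\ell\ge 1$ these three exponents cannot all equal $n$, yielding an accepted word outside $L_{ta}$. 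Formulating and proving the correct matched-pair pumping statement for $\niddhpda$, and in particular verifying that the pumped run remains a legal computation on the extended input, is the technical heart of the argument.
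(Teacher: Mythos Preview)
The paper states this lemma without proof, so a direct comparison is impossible; judging from the proofs of Theorems~\ref{theo:ndet-not-hom} and~\ref{theo:ndet-not-union} the intended argument is the same kind of case analysis on the signature that you propose, followed by a pumping argument. Your overall plan is sound, and your reduction to a two-head NFA whenever the stack can only grow or only shrink is clean. In fact it covers more than you claim: whenever $b\in\Sigma_N$ the stack is monotone (only $a$'s touch it), so those subcases are already subsumed. Your separate argument for $b\in\Sigma_N$ (``pump a state cycle inside the middle block'') is not correct as written, because $a$-reads by the other head may be interleaved with the $b$-reads and change the stack top between them; but since the case is redundant this does not damage the proof.

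There are two genuine soft spots. First, the two mixed assignments are \emph{not} symmetric: $L_{ta}=\{a^nb^na^n\}$ is asymmetric in $a$ and $b$, and the case $a\in\Sigma_R$, $b\in\Sigma_D$ is much easier than you suggest---both heads start in $a$-blocks, so the first $n$ steps are pops on an empty stack and your state-cycle pumping from the $a\in\Sigma_N$ case applies verbatim (this is exactly the trick in the paper's proof of Theorem~\ref{theo:ndet-not-union}). Second, in the hard case $a\in\Sigma_D$, $b\in\Sigma_R$ your sketch is too optimistic. You assume one can isolate a purely $a$-reading push segment and a matched purely $b$-reading pop segment, giving a pumped word $a^{n+\ell k_L}b^{n+\ell p}a^{n+\ell k_R}$ with $k_L+k_R=p$. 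But the matched push--pop pairs need not be deeply nested at all: if the left head first reads $a^n$ and then left-$b$-reads alternate with right-$a$-reads, every matched pair is disjoint and the nesting depth is~$1$, so no nested chain for pigeonhole exists. Even when a deep chain does exist, the segment between two nested pushes will in general contain $b$-reads as well, so the ``purely $a$-reading'' description and the identity $k_L+k_R=p$ break down. A correct argument pumps a segment bounded by two moments with equal state and stack top during which the symbol at the current stack bottom is never exposed; the resulting pumped word may fail to lie in $a^*b^*a^*$ altogether, but that is harmless since any accepted word outside $L_{ta}$ gives the contradiction. So your strategy can be completed, but the pumping step must be reformulated and the shape of the pumped word cannot be constrained as tightly as you claim.
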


By Example~\ref{exa:dta-accepted-double} and Lemma~\ref{lem:lang-simple-not}
we conclude the next theorem.

\begin{theorem}\label{theo:double-better}
The family $\lfam(\diddhpda)$ is strictly included in the family
$\lfam(\ddoiddhpda)$ and the family $\lfam(\niddhpda)$ is strictly included in the family
$\lfam(\ndoiddhpda)$.
\end{theorem}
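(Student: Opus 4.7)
My plan is a case analysis on which of $\Sigma_D,\Sigma_R,\Sigma_N$ the symbols $a$ and $b$ belong to. The unifying observation is that, since both heads share the same signature, the sequence of pushdown actions in any computation of $M$ is determined entirely by the \emph{sequence} of input symbols read---not by which head reads each one; in particular the pushdown height at any moment depends only on how many $\Sigma_D$- and $\Sigma_R$-symbols have been read in total.

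Several subcases reduce $M$ to a device essentially no stronger than a double-head nondeterministic finite automaton, which by Rosenberg's theorem~\cite{rosenberg:1967:mrlcfl} accepts only linear context-free languages; since $L_{ta}=\{a^nb^na^n\mid n\ge1\}$ is not even context-free, this rules out such subcases. Namely, if neither $a$ nor $b$ is in $\Sigma_D$, the stack remains empty throughout; and if both are in $\Sigma_D$, the stack only grows, and its top symbol together with $M$'s state forms a finite control. The remaining ``mixed'' subcases (exactly one of $a,b$ in $\Sigma_N$, or one in $\Sigma_D$ and the other in $\Sigma_R$) need more work, and after folding in the $D\leftrightarrow R$ symmetry obtained by reversing the computation, the only substantive case is $a\in\Sigma_D,\ b\in\Sigma_R$, which I focus on below.

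In this substantive case the strategy is to exploit the fact that $w_n=a^nb^na^n\in L_{ta}$ and $w'_n=a^{n+1}b^na^{n-1}\notin L_{ta}$ have identical length and letter composition, so by the key observation the pushdown-height profile of any computation on $w_n$ matches step-by-step that of an appropriately re-interleaved computation on $w'_n$. I fix an accepting computation of $M$ on $w_n$ and construct a corresponding accepting computation on $w'_n$ by permuting the left- and right-head moves to read the same multiset of symbols in an order consistent with $w'_n$. The main obstacle is that $M$'s transition function may treat the two heads asymmetrically (the two $\lambda$ positions in $\delta_X(q,a,\lambda,z)$ versus $\delta_X(q,\lambda,a,z)$ distinguish them), so the permutation must preserve not only the symbol sequence but also which head reads each symbol; handling this requires a careful fooling-set or pigeonhole argument that, for sufficiently large $n$, identifies a moment at which the state, pushdown content, and relative head positions all agree across the two computations, so that the remainder of one can be spliced onto $w'_n$ and yields acceptance of a non-$L_{ta}$ word---a contradiction.
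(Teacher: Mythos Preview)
Your proposal addresses only one ingredient of the theorem, namely Lemma~\ref{lem:lang-simple-not} (that $L_{ta}$ is rejected by every $\niddhpda$); the paper itself proves the theorem simply by combining that lemma with Example~\ref{exa:dta-accepted-double} and the trivial inclusion. So what you are really sketching is a proof of the lemma, which the paper states without proof. Your case analysis on the signature is the natural approach and the ``easy'' subcases (stack never used, or stack monotone) are correct: they reduce the machine to a double-head finite automaton and hence to linear context-free languages, which $L_{ta}$ is not.

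There are, however, two genuine gaps. First, the claimed $D\leftrightarrow R$ symmetry ``obtained by reversing the computation'' is not justified. Running a pushdown computation backwards turns pushes into pops, but that does not give you another $\niddhpda$ accepting the same language with the swapped signature; and reversal of the \emph{input} (Theorem~\ref{theo:closed-reversal}), while it maps $L_{ta}$ to itself, does not exchange $\Sigma_D$ and $\Sigma_R$. So the subcase $a\in\Sigma_R,\ b\in\Sigma_D$ is not covered and needs its own argument. Second, in your ``substantive'' case $a\in\Sigma_D,\ b\in\Sigma_R$ you correctly identify the obstacle---transitions distinguish the two heads---but you do not resolve it. You propose a pigeonhole over ``state, pushdown content, and relative head positions'', yet the pushdown content has size $\Theta(n)$, so there are exponentially many configurations and no finite pigeonhole applies; matching only height (or top symbol) is not enough to splice, because the remainder of the computation depends on the full stack. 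A working argument here needs an explicit cut point where the \emph{entire} stack is controlled (for instance, a moment where the stack is provably empty or where everything below the current top is never inspected again), and you have not exhibited one. Until those two points are filled in, the argument for Lemma~\ref{lem:lang-simple-not} is incomplete.
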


So far, we have shown that nondeterminism is better than determinism
for a single as well as for double signatures, and that double signatures
are better than a single signature for deterministic as well as
nondeterministic computations. Moreover, Lemma~\ref{lem:lang-simple-not}
shows that language $L_{ta}$ is not accepted by any nondeterministic 
input-driven double-head pushdown automaton, while 
Example~\ref{exa:dta-accepted-double} shows that language~$L_{ta}$
is accepted, even by a deterministic input-driven double-head pushdown
automaton, provided that double signatures are available. Conversely,
Example~\ref{exa:ldta-accepted} reveals that language~$L_{dta}$
is accepted even with a nondeterministic input-driven double-head pushdown
automaton with a single signature, while Lemma~\ref{lem:lang-det-not}
shows that, to this end, determinism is not sufficient. This implies
the next corollary.

\begin{corollary}\label{cor:det-double-vs-ndet-single}
The families $\lfam(\ddoiddhpda)$ and $\lfam(\niddhpda)$
are incomparable.
\end{corollary}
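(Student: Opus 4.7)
The plan is to establish the two directions of incomparability by exhibiting one witness language for each, with both witnesses already supplied by results earlier in the section.

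For $\lfam(\ddoiddhpda) \not\subseteq \lfam(\niddhpda)$, the witness is $L_{ta} = \{\, a^n b^n a^n \mid n \geq 1\,\}$. Example~\ref{exa:dta-accepted-double} constructs an explicit $\ddoiddhpda$ for $L_{ta}$, so $L_{ta} \in \lfam(\ddoiddhpda)$. Lemma~\ref{lem:lang-simple-not} states that $L_{ta}$ is not accepted by any $\niddhpda$, i.e., $L_{ta} \notin \lfam(\niddhpda)$. This already yields the first non-inclusion.

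For the opposite direction $\lfam(\niddhpda) \not\subseteq \lfam(\ddoiddhpda)$, the witness is $L_{dta}$. Example~\ref{exa:ldta-accepted} provides a $\niddhpda$ for $L_{dta}$, so $L_{dta} \in \lfam(\niddhpda)$. The point I need to make explicit is that every $\ddoiddhpda$ is, in particular, a deterministic double-head pushdown automaton in the unrestricted sense: the two signatures merely restrict the shape of the transition function to conform to the $D/R/N$ partition of each head's alphabet, but any $\ddoiddhpda$ can be regarded as an ordinary deterministic double-head pushdown automaton by forgetting the signature. Consequently, Lemma~\ref{lem:lang-det-not}, which rules out $L_{dta}$ for \emph{all} deterministic double-head pushdown automata, a fortiori rules it out for $\ddoiddhpda$s. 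Hence $L_{dta} \notin \lfam(\ddoiddhpda)$.

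Combining the two directions gives the incomparability of $\lfam(\ddoiddhpda)$ and $\lfam(\niddhpda)$. The only mildly non-trivial step is the inclusion of $\ddoiddhpda$s inside the class of deterministic double-head pushdown automata referred to in Lemma~\ref{lem:lang-det-not}; once this observation is in place, both non-inclusions follow directly from the cited example--lemma pairs.
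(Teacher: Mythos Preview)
Your proof is correct and follows exactly the same approach as the paper: the paragraph preceding the corollary already identifies $L_{ta}$ (via Example~\ref{exa:dta-accepted-double} and Lemma~\ref{lem:lang-simple-not}) and $L_{dta}$ (via Example~\ref{exa:ldta-accepted} and Lemma~\ref{lem:lang-det-not}) as the two witnesses, and your only addition is to spell out explicitly that every $\ddoiddhpda$ is in particular a deterministic double-head pushdown automaton, which the paper uses tacitly.
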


Next we turn to compare the
four language families under consideration with some other well-known language
families.

A context-sensitive grammar is said to be \emph{growing context sensitive}
if the right-hand side of every production is \emph{strictly} longer 
than the left-hand side. The family of growing context-sensitive languages ($\textrm{GCSL}$)
lies strictly in between the context-free and context-sensitive languages.
Another language family lying properly in between
the regular and the growing context-sensitive languages are the \emph{Church-Rosser languages}
($\textrm{CRL}$), which have been introduced in~\cite{McNaughton:1988:crtsfl}. They
are defined via finite, confluent, and length-reducing Thue
systems. Church-Rosser languages are incomparable to the context-free
languages~\cite{Buntrock:1998:gcslcrl:art} and have neat properties.  For
example, they parse rapidly in linear time, contain non-semilinear as well as
inherently ambiguous languages~\cite{McNaughton:1988:crtsfl}, are
characterized by deterministic automata
models~\cite{Buntrock:1998:gcslcrl:art,Niemann:2005:crldvgcsl}, and contain
the deterministic context-free languages ($\textrm{DCFL}$) as well as their reversals
($\textrm{DCFL}^R$) properly~\cite{McNaughton:1988:crtsfl}.

\begin{theorem}\label{theo:compare-better}
Each of the families $\lfam(\diddhpda)$, $\lfam(\niddhpda)$,
$\lfam(\ddoiddhpda)$, and $\lfam(\ndoiddhpda)$ is incomparable
with $\textrm{GCSL}$ as well as with $\textrm{CRL}$.
\end{theorem}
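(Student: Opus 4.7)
The plan is to separate each of the four families from both $\textrm{GCSL}$ and $\textrm{CRL}$ by exhibiting one witness language on each side of every (non-)inclusion. Two structural observations already in the paper let me do this with only two witnesses. First, $\lfam(\diddhpda)$ is contained in each of the other three families, because a single signature is a special case of a double signature and determinism is a special case of nondeterminism. Second, every $\iddhpda$ or $\doiddhpda$ is in particular a $\ndhpda$, so by the semi-linearity result cited in the introduction all four families consist of semi-linear languages; moreover $\textrm{CRL}\subseteq\textrm{GCSL}$ as noted just above the theorem.

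For the direction ``the family contains a language outside $\textrm{GCSL}$ (and hence outside $\textrm{CRL}$)'' I would re-use Example~\ref{exa:gladkij}: the marked Gladkij language $L_1$ belongs to $\lfam(\diddhpda)$ and is recorded there as being neither growing context sensitive nor Church-Rosser. By the inclusion above, $L_1$ is a member of all four families and simultaneously witnesses that none of them is contained in $\textrm{GCSL}$ or in $\textrm{CRL}$.

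For the converse direction I would use a non-semi-linear language in $\textrm{CRL}$, the standard example being $L_{\exp}=\{\,a^{2^n}\mid n\geq 0\,\}$. A length-reducing, confluent Thue system that repeatedly halves a block of $a$'s places $L_{\exp}$ in $\textrm{CRL}$, and $\textrm{CRL}\subseteq\textrm{GCSL}$ gives $L_{\exp}\in\textrm{GCSL}$ for free. Since $L_{\exp}$ is patently not semi-linear, it lies outside $\lfam(\ndhpda)$ and therefore outside each of the four families in question.

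The main obstacle is essentially bookkeeping: one has to verify that the semi-linearity bound for $\lfam(\ndhpda)$ really does transfer to each of the four input-driven variants (immediate, since the input-driven condition is only a restriction on the model) and to cite a concrete Church-Rosser presentation of $L_{\exp}$ to guarantee $L_{\exp}\in\textrm{CRL}$. Once $L_1$ and $L_{\exp}$ are in place, each of the required non-inclusions follows at once and the incomparability assertions fall out.
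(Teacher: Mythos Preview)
Your proposal is correct and coincides with the paper's own proof: both directions use exactly the witnesses you chose, namely the marked Gladkij language $L_1$ from Example~\ref{exa:gladkij} for a $\diddhpda$-language outside $\textrm{GCSL}$ (hence outside $\textrm{CRL}$), and the non-semilinear Church-Rosser language $\{a^{2^n}\mid n\ge 0\}$ together with the semilinearity of $\lfam(\ndhpda)$ for the converse. The only difference is that the paper cites \cite{McNaughton:1988:crtsfl} directly for membership of $\{a^{2^n}\mid n\ge 0\}$ in $\textrm{CRL}$ rather than sketching a Thue system.
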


\begin{proof}
Example~\ref{exa:gladkij} shows that the marked Gladkij language 
$
L_1=\{\,w\border h_1(w)^R \border h_2(w) \mid w \in \{a,b\}^*\,\},
$
where the homomorphisms $h_1$ and $h_2$ are defined by
$h_1(a)=\bar{a}$, $h_1(b)=\bar{b}$, $h_2(a)=\hat{a}$, and $h_2(b)=\hat{b}$,
is accepted by some $\diddhpda$. Language $L_1$ is not growing context
sensitive and, thus, is not a Church-Rosser language~\cite{Buntrock:1998:gcslcrl:art}.

Conversely, the unary language $\{\,a^{2^n} \mid n \ge 0\,\}$ is not
semilinear, but a Church-Rosser language~\cite{McNaughton:1988:crtsfl}.
Since every language accepted even by some nondeterministic double-head pushdown
is semilinear~\cite{Nagy:2015:afothpa} the incomparabilities claimed follow.
\end{proof}

The inclusion structure of the families in question is depicted in
Figure~\ref{fig:fam-inclusions}.

\begin{figure}[!ht]
\begin{center}
\includegraphics[scale=.8]{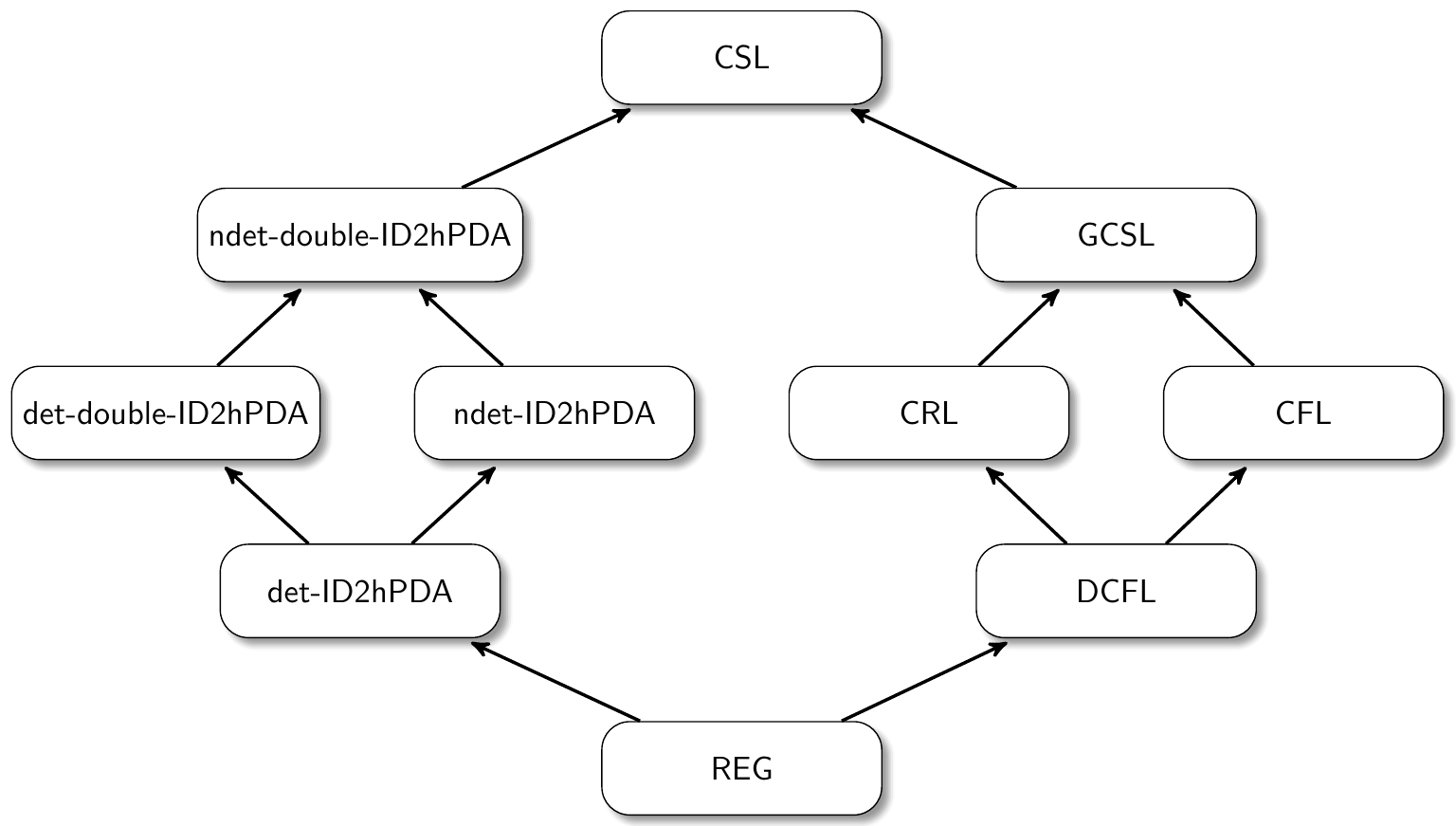}%
\end{center}
\caption{Inclusion structure of language families. The arrows indicate strict
inclusions. All nodes not connected by a directed path are incomparable, where
the incomparability with the (deterministic) context-free languages is a
conjecture.
}\label{fig:fam-inclusions}
\end{figure} 

\section{Closure properties}
\label{sec:closure}

We investigate the closure properties of the language families induced
by nondeterministic and deterministic input-driven $\dhpda$s.
Table~\ref{tab:closure} summarizes our results.
\begin{table}[!tbh]
\renewcommand{\arraystretch}{1.15}\setlength{\tabcolsep}{5pt}
\begin{center}
\begin{tabular}{|c|c|c|c|c|c|c|c|c|c|c|}
\cline{2-11}
\multicolumn{1}{c|}{\rule{0pt}{10pt}}
 & $\overline{\phantom{aa}}$ & $\cup$& $\cap$ & ${}\cup \textrm{REG}$& ${}\cap 
\textrm{REG}$ 
& $\cdot$& $*$ & $h$ & $h^{-1}$ & $R$\\
\hline
$\lfam{(\dhpda)}$ & no & yes & no & yes & yes & no & no & yes & ? & yes\\
$\lfam{(\diddhpda)}$  & yes & no & no & yes & yes & no & no & no & no &  yes\\
$\lfam( \niddhpda)$   & no & no & no & yes & yes & no & no & no & no &yes\\ 
\hline
\end{tabular}
\end{center}
\caption{Closure properties of the language classes in question.}
\label{tab:closure}
\end{table}

In \cite{Nagy:2015:afothpa} it was shown that the family of languages
accepted by ordinary nondeterministic double-head pushdown automata is
closed under union, homomorphism, and reversal, but it is \emph{not}
closed under intersection, complement, concatenation, and
iteration. Furthermore it was shown that the languages 
$$L_1 =\{\,a^nb^nc^nd^ne^n\mid n\ge 1\,\},\quad 
L_2^2,\quad{and}\quad L_2^*,\quad\mbox{for $L_2=\{\,a^nb^nc^nd^n\mid
  n\ge 1\,\}$}$$ cannot be accepted by any nondeterministic
double-head pushdown automaton. We start our investigation with the
language families $\lfam{(\diddhpda)}$ and $\lfam(\niddhpda)$.

Our first result is an easy observation that if the input heads of a
(input-driven) \dhpda\ change their roles, the accepted language is
the reversal of the original language. Hence we show closure under
reversal of the language families in question.

\begin{theorem}\label{theo:closed-reversal}
  Both language families $\lfam(\diddhpda)$ and $\lfam(\niddhpda)$ are
  closed under reversal.
\end{theorem}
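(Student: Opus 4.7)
The plan is to realize the ``easy observation'' mentioned in the paragraph preceding the theorem: swap the roles of the two input heads in the transition functions. Formally, given an $\niddhpda$ (respectively $\diddhpda$) $M=\langle Q,\Sigma,\Gamma,q_0,F,\bot,\delta_D,\delta_R,\delta_N\rangle$ with signature $(\Sigma_D,\Sigma_R,\Sigma_N)$, I would define an automaton $M'=\langle Q,\Sigma,\Gamma,q_0,F,\bot,\delta'_D,\delta'_R,\delta'_N\rangle$ over the same signature by setting, for every $X\in\{D,R,N\}$, every $q\in Q$, every $a\in\Sigma_X\cup\{\lambda\}$ and every $z\in\Gamma\cup\{\bot\}$,
\[
\delta'_X(q,a,\lambda,z)=\delta_X(q,\lambda,a,z)\quad\text{and}\quad \delta'_X(q,\lambda,a,z)=\delta_X(q,a,\lambda,z).
\]
Since $\Sigma_D$, $\Sigma_R$, and $\Sigma_N$ are unchanged, $M'$ is again input-driven with respect to a single signature; and since the transformation merely relabels the ``left vs.\ right'' head in each entry, the ``at most one choice'' property is preserved, so $M'$ is deterministic whenever $M$ is.

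The next step is to prove $L(M')=L(M)^R$. I would do this by induction on the number of steps in a computation, showing the invariant
\[
(q_0,w,\lambda)\vdash_M^{k}(q,u,s)\ \Longleftrightarrow\ (q_0,w^R,\lambda)\vdash_{M'}^{k}(q,u^R,s),
\]
where $u$ is the unread middle factor of $w$ and $s$ is the current pushdown content. The base case $k=0$ is immediate because $u=w$ and $u^R=w^R$. For the inductive step, any single step of $M$ consumes either the leftmost or the rightmost symbol of $u$; by the definition of $\delta'_X$ the symmetric step of $M'$ consumes the same physical symbol as the rightmost resp.\ leftmost symbol of $u^R$, with the identical pushdown action (push, pop, or state change) and the identical new state and pushdown contents. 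The two simulations therefore stay in lock-step. In particular, $M$ reaches $(q,\lambda,s)$ with $q\in F$ from input $w$ if and only if $M'$ reaches $(q,\lambda,s)$ from input $w^R$, which yields $L(M')=L(M)^R$.

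Since the construction preserves determinism and input-drivenness, we conclude that $L^R\in\lfam(\diddhpda)$ whenever $L\in\lfam(\diddhpda)$, and similarly for $\lfam(\niddhpda)$. There is no real obstacle here: the only thing to be careful about is keeping the $\lambda$-slot and the symbol-slot straight when writing down $\delta'_X$, and verifying that the symmetric position argument in the inductive step also covers the boundary case where the pushdown store is empty (so that the special symbol $\bot$ is used) and the case where the chosen transition is a stationary $\lambda$-move is excluded by the standing assumption that the second or third argument of each transition is~$\lambda$ but not both.
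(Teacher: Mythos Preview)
Your proposal is correct and follows essentially the same approach as the paper: both construct $M'$ by swapping the left/right head arguments in every transition rule (the paper writes this as $\delta'_X(q,y,x,z)\ni\cdot \Leftrightarrow \delta_X(q,x,y,z)\ni\cdot$, which is exactly your definition) and then appeal to an induction on the length of computations to obtain $L(M')=L(M)^R$, noting that determinism is preserved. Your write-up is in fact slightly more explicit about the induction invariant than the paper's, which merely states that the claim is ``easy to see by induction on the length of the computation.''
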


\begin{proof}
Let $M=\langle Q,\Sigma, \Gamma, q_0, F,\bot, 
\delta_D, \delta_R, \delta_N\rangle$ be an 
$\niddhpda$. We construct an 
$\niddhpda$
$M'=\langle Q,\Sigma, \Gamma, q_0, F,\bot, 
\delta_D', \delta_R', \delta_N'\rangle$,
where the transition function is defined as follows:
for every $q,q'\in Q$,\ $z,z'\in\Gamma^*$, and $x,y\in\Sigma$ we set 
\begin{itemize}
\item $(q',z') \in \delta_D'(q,y,x,z)$, if $(q',z') \in
  \delta_D(q,x,y,z)$,
\item $q' \in \delta_R'(q,y,x,z)$, if
  $q' \in \delta_R(q,x,y,z)$, and
\item $q' \in \delta_N'(q,y,x,z)$,
  if $q' \in \delta_N(q,x,y,z)$.
 \end{itemize}
 Then it is easy to see by induction on the length of the computation
 that~$M'$ accepts the reversal of the language~$L(M)$, that is,
 $L(M')=L(M)^R$. Observe, that~$M'$ is deterministic, if~$M$
 was. Thus, we have shown closure of both language families under the
 reversal operation.
\end{proof}

The above mentioned result that the family of languages accepted by
double-head pushdown automata is not closed under intersection carries
over to the input-driven case as well.

\begin{theorem}\label{theo:nclosed-intersection}
Both families $\lfam(\diddhpda)$ and 
$\lfam(\niddhpda)$ are \emph{not} closed under intersection.
\end{theorem}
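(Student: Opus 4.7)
The plan is to prove both non-closures at once by exhibiting a single pair of languages $L_A,L_B\in\lfam(\diddhpda)$ whose intersection equals $L_{ta}$ from Lemma~\ref{lem:lang-simple-not}. Since that lemma gives $L_{ta}\notin\lfam(\niddhpda)$ and trivially $\lfam(\diddhpda)\subseteq\lfam(\niddhpda)$, the same pair of witnesses rules out closure under intersection for both families simultaneously. Concretely, I set $L_A=\{\,a^nb^na^m\mid n,m\geq 1\,\}$ and $L_B=\{\,a^mb^na^n\mid n,m\geq 1\,\}$. A quick inspection shows that any word in $L_A\cap L_B$ has the form $a^pb^qa^r$ with $p=q$ (forced by the $L_A$-condition) and $q=r$ (forced by the $L_B$-condition), so $L_A\cap L_B=\{\,a^nb^na^n\mid n\geq 1\,\}=L_{ta}$.

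For $L_A$ I would build a $\diddhpda$ over the signature $\Sigma_D=\{a\}$, $\Sigma_R=\{b\}$ that lets the left head process the whole input while the right head stays idle. With states $q_0,q_1,q_2,q_3$ and $q_3$ the sole accepting state, the transition table consists of the left-head rules $\delta_D(q_0,a,\lambda,\bot)=(q_1,A)$, $\delta_D(q_1,a,\lambda,A)=(q_1,A)$, $\delta_R(q_1,b,\lambda,A)=q_2$, $\delta_R(q_2,b,\lambda,A)=q_2$, $\delta_D(q_2,a,\lambda,\bot)=(q_3,A)$, and $\delta_D(q_3,a,\lambda,A)=(q_3,A)$. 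These rules track the three phases $a^n$, $b^n$, $a^m$; the switch from the $b$-phase into the trailing-$a$-phase fires exactly when the $n$ $b$'s have popped the $n$ matching $a$'s, because at that moment the top of the pushdown has dropped back to the bottom marker~$\bot$. Closure under reversal (Theorem~\ref{theo:closed-reversal}) then yields $L_B=L_A^R\in\lfam(\diddhpda)$ at no extra cost, and chaining everything gives $L_A\cap L_B=L_{ta}\notin\lfam(\niddhpda)$, proving both families non-closed.

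The only delicate point I anticipate is verifying the input-driven condition for $L_A$. Because the same letter $a$ occurs in both the leading and trailing blocks, the signature forces every $a$ to push and every $b$ to pop regardless of phase, so the three phases $a^n$, $b^n$, $a^m$ cannot be discriminated by varying the stack action and must instead be discriminated by the control state together with the implicit empty-stack signal~$\bot$. The construction above exploits exactly this signal: it is the reappearance of~$\bot$ on top after the last $b$ that licenses the transition out of the $b$-phase into the trailing-$a$-phase. Once this observation is made, the rest is routine book-keeping, and the corresponding construction for $L_B$ comes for free via reversal.
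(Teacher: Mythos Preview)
Your proof is correct and follows a genuinely different route from the paper's. The paper exhibits the five-letter witnesses $L=\{\,a^nb^nc^nd^me^\ell\mid m,n,\ell\geq1\,\}$ and $L'=\{\,a^mb^\ell c^nd^ne^n\mid m,n,\ell\geq1\,\}$, argues that each is accepted by a $\diddhpda$ (with \emph{different} signatures), and then appeals to Nagy's result from~\cite{Nagy:2015:afothpa} that $L\cap L'=\{\,a^nb^nc^nd^ne^n\mid n\geq1\,\}$ already lies outside the unrestricted family $\lfam(\ndhpda)$. Your argument instead stays entirely within the paper's own results: you pick two-letter witnesses $L_A,L_B$ sharing the \emph{same} signature $\Sigma_D=\{a\}$, $\Sigma_R=\{b\}$, get $L_B=L_A^R$ for free via Theorem~\ref{theo:closed-reversal}, and invoke Lemma~\ref{lem:lang-simple-not} to conclude $L_A\cap L_B=L_{ta}\notin\lfam(\niddhpda)$. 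What the paper's choice buys is a stronger statement about the intersection (it escapes the whole double-head hierarchy, not just the input-driven fragment); what your choice buys is self-containment, a minimal alphabet, and the sharper observation that non-closure already occurs for a pair of witnesses with a common signature. Your one-head $\diddhpda$ for $L_A$ is sound: the crucial point---that the reappearance of $\bot$ on top of the pushdown after the $b$-block is what licenses the phase change---is exactly what makes the construction input-driven despite the reuse of the letter~$a$.
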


\begin{proof}
  In~\cite{Nagy:2015:afothpa} the non-closure of $\lfam(\dhpda)$ under
  intersection was shown with the help of the \dhpda\ languages
  $L=\{\,a^nb^nc^nd^me^\ell\mid m,n,\ell\ge 1\,\}$ and $L'=\{
  a^mb^\ell c^n d^ne^n\mid n,m,\ell\geq 1\,\}$, since their
  intersection $L\cap L'=\{\,a^nb^nc^nd^ne^n\mid n\geq 1\,\}$ is not
  member of $\lfam(\dhpda)$. Thus, in order to prove our non-closure
  result on intersection it suffices to show that both languages~$L$
  and~$L'$ can already be accepted by a \diddhpda.

  We only give a brief description of a \diddhpda~$M$ that accepts the
  language~$L$. By a similar argumentation one can construct a
  \diddhpda\ for the language~$L'$, too. On input~$w$ the
  \diddhpda~$M$ proceeds as follows: the right head of $M$ moves from
  right to left until it reaches the first~$c$ and checks whether the
  input has a suffix of the~$d^me^\ell$, for some $m,\ell\ge 1$.  This
  can be done without using the pushdown store and without moving the
  left head.  Afterwards it again moves only its right head and pushes
  a~$C$ for every letter~$c$ into the pushdown store. When it reaches the
  first $b$ it starts alternately moving the left and the right head,
  reading letter~$b$ from the right and~$a$ from the left, beginning
  with the right head, while it pops for every movement of the right
  head a~$C$ from the pushdown store. If the pushdown store is empty
  and the left head moves to the right, it enters an accepting
  state. The alphabets of $M$ are $\Sigma_D=\{c\}, \Sigma_R=\{b\},
  \Sigma_N=\{a,d,e\}$. A detailed construction of~$M$ is left to the
  reader. This proves the stated claim.
\end{proof}

Before we continue with the complementation operation, we first
establish that every deterministic and nondeterministic input-driven
double-head pushdown automaton can be forced to read the entire input.
This property turns out to be useful for the following construction
showing the closure under complementation for deterministic input-driven 
double-head pushdown automata.

\begin{lemma}\label{lemm:complete-input}
  Let $M$ be an \niddhpda. Then one can construct an equivalent
  \niddhpda~$M'$, that is, $L(M')=L(M)$, that decides on
  acceptance/rejection after it has read the entire input. If~$M$ is
  deterministic, then so is~$M'$.
\end{lemma}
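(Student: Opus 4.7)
The plan is to introduce a single non-accepting sink state $q_s$ together with a fresh dummy pushdown symbol $Z$, and to fill in all missing transitions so that $M'$ can never get stuck while input remains. Concretely, I would set $Q' = Q \cup \{q_s\}$, $F' = F$, and $\Gamma' = \Gamma \cup \{Z\}$. The new transition functions first copy every transition of $M$; then, for every $(q, z) \in Q \times (\Gamma \cup \{\bot\})$ and every letter $a \in \Sigma_X$ (with $X \in \{D, R, N\}$) for which $\delta_X(q, a, \lambda, z)$ is undefined, I add a left-head transition $\delta_X'(q, a, \lambda, z) \ni (q_s, Z)$ in the push case, or $\delta_X'(q, a, \lambda, z) \ni q_s$ in the pop and neutral cases. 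The symmetric addition is done for the right-head transitions $\delta_X(q, \lambda, a, z)$. Finally, at the sink state itself I install left-head self-loops $\delta_X'(q_s, a, \lambda, z)$ returning to $q_s$ with the stack action dictated by $X$, for every $a \in \Sigma$ and every $z \in \Gamma' \cup \{\bot\}$.

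Correctness then follows easily. Every accepting computation of $M$ uses only original transitions and, by the acceptance condition, consumes the whole input, so it is still an accepting computation of $M'$. Conversely, any computation of $M'$ that ever enters $q_s$ remains in $q_s$ forever and ends with $q_s \notin F'$, so the accepting computations of $M'$ correspond bijectively to those of $M$, giving $L(M') = L(M)$. Moreover, as long as the unread input is nonempty, either an original transition is applicable in the current state, or by construction a newly added transition is; once the run is in $q_s$, the left-head self-loops guarantee the remaining input is consumed. Hence $M'$ always exhausts the input before halting.

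The main obstacle is preserving determinism in the case where $M$ is a $\diddhpda$. In a deterministic $M$, at most one transition can ever be applicable at any reachable configuration, which forces that at each $(q, z)$ either all left-head transitions or all right-head transitions are undefined (otherwise some configuration would have two applicable moves). I would therefore refine the completion by, at each $(q, z)$, placing the added sink transitions only on the head side already used by $M$, defaulting to the left head when $M$ uses neither, and by keeping $q_s$'s self-loops restricted to the left head. Under this refinement, at every configuration of $M'$ at most one transition is applicable, so determinism is preserved; the same refinement applies unchanged to the case where each head carries its own signature, with the signature-assignment of each head governing which stack action the sink transitions on that side perform.
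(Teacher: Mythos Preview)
The paper states this lemma without proof, so there is no argument to compare against; your trap-state completion is the natural construction and is correct. One small point of precision: when arguing that a deterministic $M$ uses transitions on at most one head side at each pair $(q,z)$, you appeal to ``any reachable configuration'', whereas the paper defines determinism as ``at most one choice of action for any possible configuration'', i.e., over \emph{all} configurations. Under the paper's definition your conclusion is immediate---if both a left-head move on some $a$ and a right-head move on some $b$ were defined at $(q,z)$, then the configuration $(q,awb,s)$ with top symbol $z$ would admit two applicable moves---so the refined completion you describe (fill in only the head side already in use at $(q,z)$, default to the left head when neither is used, and equip $q_s$ with left-head self-loops over all of $\Gamma'\cup\{\bot\}$) indeed preserves determinism while guaranteeing that every run exhausts the input.
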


The family of languages accepted by double-head pushdown automata are
not closed under complementation. We show that the family of languages
accepted by deterministic input-driven double-head pushdown automata
is closed under complementation, while the nondeterministic family is
not closed.

\begin{theorem}\label{theo:clos-comp}
The family $\lfam(\diddhpda)$ is closed under complementation.
\end{theorem}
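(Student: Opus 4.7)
The plan is the standard one for deterministic machines: complement the set of accepting states. The only subtlety is that this swap works only if the automaton actually processes the whole input and is then in a uniquely determined configuration whose accept/reject status we can flip. So before swapping, we normalize the automaton.

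First, given a $\diddhpda$ $M$, invoke Lemma~\ref{lemm:complete-input} to obtain an equivalent $\diddhpda$ $M' = \langle Q, \Sigma, \Gamma, q_0, F, \bot, \delta_D, \delta_R, \delta_N\rangle$ that always runs until the entire input has been consumed and only then decides on acceptance or rejection. By definition of an $\niddhpda$, no transition is defined on $Q\times\{\lambda\}^2\times(\Gamma\cup\{\bot\})$, so once the two heads have met (i.e.,\ the unread portion of the tape is empty) no further move is possible. Combined with determinism, this means that every computation of $M'$ on input $w$ terminates in a uniquely determined final configuration $(q_w,\lambda,s_w)$.

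Now define $M'' = \langle Q, \Sigma, \Gamma, q_0, Q\setminus F, \bot, \delta_D, \delta_R, \delta_N\rangle$, which differs from $M'$ only in that its accepting states are precisely the non-accepting states of $M'$. Since the transition functions and the signature are left unchanged, $M''$ is a $\diddhpda$ with the same signature as $M'$. For each input $w$, both $M'$ and $M''$ reach the same unique terminal configuration $(q_w,\lambda,s_w)$, and therefore $w\in L(M'')$ iff $q_w\in Q\setminus F$ iff $q_w\notin F$ iff $w\notin L(M')=L(M)$. This yields $L(M'')=\overline{L(M)}$ and proves the claim.

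The main (and essentially only) obstacle is the preliminary normalization: a raw $\diddhpda$ may halt prematurely because no applicable transition exists in some configuration, or the two heads may simply miss each other without reaching a configuration in which acceptance is defined. Lemma~\ref{lemm:complete-input} is precisely what removes this difficulty, and its deterministic version has to be invoked carefully (typically by adding a sink state that consumes the remainder of the input according to the signature while preserving determinism) so that no nondeterminism is accidentally reintroduced. Once this is in hand, the remainder of the argument is the classical complementation trick for deterministic finite automata.
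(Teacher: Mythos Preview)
Your proposal is correct and follows essentially the same approach as the paper: invoke Lemma~\ref{lemm:complete-input} to normalize the $\diddhpda$ so that it always consumes the whole input, then flip the set of accepting states. Your write-up is in fact more careful than the paper's, since you explicitly justify uniqueness of the terminal configuration via determinism and the absence of $(\lambda,\lambda)$-transitions, whereas the paper leaves this implicit.
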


\begin{proof}
  Let $M=\langle Q,\Sigma, \Gamma, q_0, F,\bot, \delta_D, \delta_R,
  \delta_N\rangle$ be a \diddhpda. By the previous lemma we can assume
  w.l.o.g.\ that~$M$ decides on acceptance/rejection after it has read
  the entire input. But then, if we exchange accepting and
  non-accepting states we accept the complement of~$L(M)$. Thus, 
  the \diddhpda\ $M'=\langle Q,\Sigma, \Gamma, q_0, F',\bot, \delta_D,
  \delta_R, \delta_N\rangle$ with $F'=Q\setminus F$ is an acceptor for the 
  language $\overline{L(M)}$. This proves our statement.
\end{proof}

Since the family of languages accepted by $\diddhpda$ is not closed
under intersection, it can be concluded that it is not closed under
union.

\begin{theorem}\label{theo:det-closed-union}
The family $\lfam(\diddhpda)$ is not closed under union.
\end{theorem}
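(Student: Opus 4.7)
The plan is to derive the non-closure under union as a direct consequence of the two preceding results: closure under complementation (Theorem~\ref{theo:clos-comp}) and non-closure under intersection (Theorem~\ref{theo:nclosed-intersection}), via De Morgan's law.

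First, I would argue by contradiction: suppose $\lfam(\diddhpda)$ were closed under union. Then, given any two languages $L_1, L_2 \in \lfam(\diddhpda)$ over a common alphabet $\Sigma$, their complements $\overline{L_1}$ and $\overline{L_2}$ would belong to $\lfam(\diddhpda)$ by Theorem~\ref{theo:clos-comp}. By the assumed closure under union, $\overline{L_1} \cup \overline{L_2}$ would also lie in $\lfam(\diddhpda)$. Applying Theorem~\ref{theo:clos-comp} once more, the complement $\overline{\overline{L_1} \cup \overline{L_2}} = L_1 \cap L_2$ would be in $\lfam(\diddhpda)$, contradicting Theorem~\ref{theo:nclosed-intersection}.

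A minor point to check is that the complementation argument indeed works for arbitrary pairs over the same alphabet and not just for the specific witness languages used in Theorem~\ref{theo:nclosed-intersection}; this is immediate since the witness languages $L=\{\,a^nb^nc^nd^me^\ell\mid m,n,\ell\ge 1\,\}$ and $L'=\{\,a^mb^\ell c^nd^ne^n\mid n,m,\ell\ge 1\,\}$ from the proof of Theorem~\ref{theo:nclosed-intersection} are already defined over the common alphabet $\{a,b,c,d,e\}$, so the De Morgan step applies directly to them. There is no real obstacle in this proof; it is a routine Boolean-algebra argument that leverages the asymmetry between the closure/non-closure results already established.
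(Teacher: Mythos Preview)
Your proposal is correct and follows essentially the same approach as the paper: the paper also derives the non-closure under union immediately from the closure under complementation (Theorem~\ref{theo:clos-comp}) together with the non-closure under intersection (Theorem~\ref{theo:nclosed-intersection}) via De~Morgan's law. Your version is in fact slightly more explicit, spelling out the contradiction and noting that the witness languages from Theorem~\ref{theo:nclosed-intersection} share a common alphabet.
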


Let us come back to the complementation operation. For the language
family induced by \niddhpda\ we obtain non-closure under
complementation in contrast to the above given theorem on the
deterministic language family in question.

\begin{theorem}\label{theo:ndet-not-closed-int}
The family $\lfam(\niddhpda)$ is not closed under complementation.
\end{theorem}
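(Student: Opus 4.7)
The plan is to imitate the classical non-closure-under-complement argument for context-free languages by exhibiting one language $L \in \lfam(\niddhpda)$ whose complement lies outside $\lfam(\niddhpda)$. The natural witness is $L_{ta} = \{\, a^nb^na^n \mid n \geq 1\,\}$ from Lemma~\ref{lem:lang-simple-not}, which is already known not to belong to $\lfam(\niddhpda)$. I would therefore reduce the theorem to showing that the complement $\overline{L_{ta}} = \{a,b\}^* \setminus L_{ta}$ does belong to $\lfam(\niddhpda)$; granting closure under complementation then yields $L_{ta}=\overline{\overline{L_{ta}}}\in\lfam(\niddhpda)$, contradicting Lemma~\ref{lem:lang-simple-not}.

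To realize $\overline{L_{ta}}$ I would build one $\niddhpda$ with signature $\Sigma_D=\{a\}$, $\Sigma_R=\{b\}$, $\Sigma_N=\emptyset$ that, in its first step, nondeterministically commits to one of three reasons why an input $w$ can fail to lie in $L_{ta}$: (i) $w$ does not match the regular pattern $a^+b^+a^+$, a condition that the finite control can verify on its own while the forced push/pop actions are simply carried along; (ii) $w = a^{n_1} b^{n_2} a^{n_3}$ with $n_i\ge 1$ and $n_1 \neq n_2$, witnessed by the left head alone pushing through $a^{n_1}$ and popping through $b^{n_2}$ while the right head stays put at the right end; (iii) the mirror-image condition $n_2 \neq n_3$, witnessed analogously by the right head reading from the opposite end. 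At the transition from the $b$-block to the trailing $a$-block in branch~(ii) I would distinguish $n_1>n_2$ (pushdown still non-empty), $n_1=n_2$ (pushdown just became empty with no underflow), and $n_1<n_2$ (a pop on~$\bot$ has occurred) by carrying an ``underflow seen'' flag in the state and by inspecting the top of the pushdown store on the first subsequent $a$; only the first and the third of these lead to an accepting continuation that consumes the remaining $a^{n_3}$. Branch~(iii) is completely symmetric via the right head.

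The main obstacle is keeping the signature single: since $a$ is forced to push and $b$ is forced to pop in \emph{every} computation, the transition functions must be defined on both a true stack symbol and~$\bot$ in every state, so that neither the regular branch~(i) nor the pop-on-empty situations in branches~(ii) and~(iii) get stuck or accept the wrong inputs. In particular, the states used in branch~(ii) between the start of the $b$-block and the first trailing $a$ must not be accepting, lest words of the form $a^{n_1}b^{n_2}$ be accepted spuriously, and the initial state must be declared accepting so that the empty word is covered by branch~(i). Once this bookkeeping is carried out, the three branches combine into a single $\niddhpda$ whose language is exactly $\overline{L_{ta}}$, and the theorem follows by the contradiction outlined above.
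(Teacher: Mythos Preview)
Your argument is correct, but it follows a different path from the paper's. The paper takes as witness $L_1=\{a^nb^nc^nd^ne^n\mid n\ge 1\}$, which is known not to be accepted by \emph{any} double-head pushdown automaton, and then builds an $\niddhpda$ for $\overline{L_1}$ that uses the trivial signature $\Sigma_N=\Sigma$, $\Sigma_D=\Sigma_R=\emptyset$: the regular-format check is done in the finite control, and each inequality $n_i\neq n_j$ is verified by driving the two heads to the respective blocks and moving them alternately, with no pushdown activity whatsoever. Your proof instead reuses the witness $L_{ta}$ from Lemma~\ref{lem:lang-simple-not} and accepts $\overline{L_{ta}}$ with a non-trivial signature $\Sigma_D=\{a\}$, $\Sigma_R=\{b\}$, comparing adjacent blocks via the pushdown rather than via head alternation. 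The paper's route buys a much shorter construction with no stack bookkeeping, underflow flags, or case analysis on the top-of-stack symbol; your route buys economy of hypotheses, since it rests only on Lemma~\ref{lem:lang-simple-not} (an $\niddhpda$ lower bound already proved in the paper) rather than importing the external fact that $\{a^nb^nc^nd^ne^n\}\notin\lfam(\ndhpda)$. Both arguments are sound; the paper's is just less delicate to write out in full.
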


\begin{proof}
  In \cite{Nagy:2015:afothpa} it has been shown that the language
  $L_1=\{ a^nb^nc^nd^ne^n\mid n\ge 1\}$ cannot be accepted even by any
  double-head pushdown automata. We briefly show that the language
  $\overline{L_1}$ is accepted by an \niddhpda~$M$. The complement
  of~$L_1$ can be described as follows: a word is in~$\overline{L_1}$
  if and only if (i) it belongs to \emph{complement} of the regular
  language $a^+b^+c^+d^+e^+$ or (ii) it belongs to one of the
  context-free languages $\{\,a^{n_1}b^{n_2}c^{n_3}d^{n_4}e^{n_5}\mid
  \mbox{$n_1,n_2,\ldots, n_5\geq 1$ and $n_i\neq n_j$}\,\}$, for some
  pair $(i,j)\in\{1,2,\ldots,5\}^2$ with $i\neq j$. Thus, on input~$w$
  the \niddhpda~$M$ guesses which of the above cases~(i) or~(ii)
  applies. In the first case~$M$ simulates a finite automaton without
  using its pushdown store. In the second case, automaton~$M$ guesses
  appropriate~$i$ and $j$ with $i\neq j$ from $\{1,2,\ldots, 5\}$ and
  moves its two heads to the corresponding blocks of letters. Then it
  checks whether $n_i\neq n_j$ by alternately moving the left and right
  head without using the pushdown store. If $n_i\neq n_j$ and the
  heads meet, the automaton accepts, otherwise it rejects. Since~$M$
  is not using the pushdown store at all, only the transition
  function~$\delta_N$ is defined. Thus, the signature is
  $\Sigma_N=\Sigma$ and $\Sigma_D=\Sigma_R=\emptyset$. 

  Since~$M$ accepts~$\overline{L_1}$, but $L_1$ cannot be accepted by
  any double-head pushdown automata, the language family
  $\lfam(\niddhpda)$ is not closed under complementation.
\end{proof}

Now, $L_2$ can be used to show that the family of languages 
accepted by deterministic and nondeterministic input-driven double-head 
pushdown automata are not closed under concatenation and iteration.

\begin{theorem}\label{theo:ndet-not-closed-concat}
  Both language families $\lfam(\diddhpda)$ $\lfam(\niddhpda)$ are
  \emph{not} closed under concatenation and iteration.
\end{theorem}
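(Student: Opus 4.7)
The plan is to use the language $L_2 = \{\, a^n b^n c^n d^n \mid n \geq 1\,\}$ as a common witness for all four non-closure statements. From the facts recalled at the beginning of Section~\ref{sec:closure} (see~\cite{Nagy:2015:afothpa}), neither $L_2^2$ nor $L_2^*$ is accepted by any nondeterministic double-head pushdown automaton. Since every $\niddhpda$ is a special case of a nondeterministic $\dhpda$ and $\lfam(\diddhpda) \subseteq \lfam(\niddhpda)$, it suffices to exhibit a $\diddhpda$ for $L_2$: closure of either family under concatenation would then force $L_2^2 = L_2 \cdot L_2$ into $\lfam(\ndhpda)$, and closure under Kleene star would force $L_2^*$ into $\lfam(\ndhpda)$, both contradicting the quoted non-acceptance result.

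The bulk of the proof is therefore the construction of a $\diddhpda$ $M$ for $L_2$. I would take the signature $\Sigma_D = \{a\}$, $\Sigma_R = \{d\}$, $\Sigma_N = \{b,c\}$, together with a small state set in which the state alone determines which head moves next, so that head movements strictly alternate. In the first phase $M$ repeatedly reads an $a$ with the left head, pushing a symbol $A$, and then reads a $d$ with the right head, popping one $A$; after $n$ such rounds the stack is empty, the left head sits on the first $b$, and the right head on the last $c$. In the second phase $M$ continues the left/right alternation but now via $\Sigma_N$-transitions: the left head consumes one $b$ and the right head consumes one $c$ per round, both leaving the stack untouched. A unique accepting state is entered precisely in the configuration where the two heads meet.

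Determinism and the signature are respected because, within each state, at most one transition is applicable once the scheduled head's symbol and the stack top are fixed; pushes fire only on $a$, pops only on $d$, and the $b/c$-moves do not touch the stack. Rejection of malformed inputs is automatic: if the input is not of the form $a^+b^+c^+d^+$, the scheduled head faces a symbol for which no transition is defined; the imbalance $|a|\neq|d|$ shows up either as a pop with $\bot$ on top or as an $A$ still on the stack when the left head first meets a $b$; and the imbalance $|b|\neq|c|$ either leaves input unread or lands $M$ in a non-accepting state.

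The delicate point I expect is the handoff between the two phases, when the stack becomes empty for the first time and $M$ must switch from matching $a$'s against $d$'s to consuming $b$'s against $c$'s without violating determinism or the fixed signature. This is handled by making the switching transition fire only when the stack top is $\bot$ and the left head reads $b$, which separates it cleanly from every phase-one transition. Once this local detail is verified, $L_2 \in \lfam(\diddhpda)$ is established and the four non-closure statements follow from the contradictions outlined above.
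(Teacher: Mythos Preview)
Your overall strategy matches the paper's intended argument exactly: exhibit a $\diddhpda$ for $L_2=\{a^nb^nc^nd^n\mid n\ge 1\}$ and then invoke the fact from~\cite{Nagy:2015:afothpa} that $L_2^2$ and $L_2^*$ lie outside $\lfam(\ndhpda)$. That part is fine.

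The gap is in your concrete machine for $L_2$. With the signature $\Sigma_D=\{a\}$, $\Sigma_R=\{d\}$, $\Sigma_N=\{b,c\}$ and the phase structure you describe, each round of phase~1 pushes one $A$ on an~$a$ and immediately pops it on the paired~$d$; the stack is therefore empty after \emph{every} round, and in particular carries no information into phase~2. Phase~1 then enforces only $|w|_a=|w|_d$ (via the strict alternation of the heads), and phase~2 enforces only $|w|_b=|w|_c$ (via the heads meeting). Nothing ties the two pairs together, so your automaton accepts every word of the form $a^n b^m c^m d^n$ with $n,m\ge 1$, not just $L_2$. Your own correctness paragraph lists the cases ``$|a|\neq|d|$'' and ``$|b|\neq|c|$'' but never the case $|a|\neq|b|$, which is precisely the one your machine cannot see.

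The fix is to let the stack survive into the second phase so that it supplies the missing equality. For instance, take $\Sigma_D=\{a\}$, $\Sigma_R=\{b\}$, $\Sigma_N=\{c,d\}$. In phase~1 alternate the left head on~$a$ (push~$A$) with the right head on~$d$ (state change only); this checks $|a|=|d|$ and leaves $A^{n}$ on the stack. In phase~2 alternate the left head on~$b$ (pop) with the right head on~$c$ (state change only); this checks $|b|=|c|$ by the heads meeting, and $|b|=n$ by requiring the stack to reach~$\bot$ exactly at the last~$b$. With this adjustment the construction is deterministic, respects a single signature, and genuinely accepts $L_2$; the remainder of your argument then goes through unchanged.
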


While both families $\lfam(\diddhpda)$ and $\lfam(\niddhpda)$ are not
closed under union and intersection, they are closed under the union
and intersection with regular languages.

\begin{theorem}\label{theo:ndet-closed-regunion}
  Both families $\lfam(\diddhpda)$ and $\lfam(\niddhpda)$ are closed
  under intersection and union with regular languages.
\end{theorem}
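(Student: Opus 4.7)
The plan is to perform a product construction between the given automaton $M$ and a DFA $A=(P,\Sigma,\delta_A,p_0,F_A)$ for the regular language $R$. The principal obstacle is that $A$ scans its input from a single end, while $M$ consumes the input from both ends simultaneously, so the usual one-state product is not directly applicable.

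To bridge this, I maintain in the finite control of the new automaton $M'$ two items of regular-language bookkeeping. First, a state $p\in P$ describing the situation of $A$ after reading the prefix already consumed by $M$'s left head. Second, a function $g\colon P\to P$ that, for every DFA state~$q$, gives $\delta_A^*(q,v)$, where~$v$ is the suffix of the input that has already been consumed (from right to left) by $M$'s right head. Initially $p=p_0$ and $g$ is the identity. When the left head reads a symbol~$a$, update $p\mapsto\delta_A(p,a)$. When the right head reads~$a$, note that this new symbol is prepended to the suffix accumulated so far, so update $g$ to the function $q\mapsto g(\delta_A(q,a))$. A straightforward induction then gives $\delta_A^*(p_0,uv)=g(\delta_A^*(p_0,u))$ whenever the left head has consumed the prefix~$u$ and the right head has consumed the suffix~$v$, so at the moment the two heads meet on an input $w=uv$ we obtain $w\in R$ iff $g(p)\in F_A$.

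Given this, $M'$ has state set $Q\times P\times P^P$, inherits $M$'s pushdown alphabet and signature $(\Sigma_D,\Sigma_R,\Sigma_N)$, and simply lifts every transition of $M$ by additionally updating $p$ or $g$ as above while copying the pushdown action verbatim. Hence $M'$ is input-driven on the same signature and is deterministic whenever $M$ is. For the intersection $L(M)\cap R$ I declare the accepting states to be $\{(q,p,g)\mid q\in F\text{ and }g(p)\in F_A\}$. For the union $L(M)\cup R$ I first apply Lemma~\ref{lemm:complete-input} to replace $M$ by an equivalent automaton that always consumes the whole input, and then take the accepting states to be $\{(q,p,g)\mid q\in F\text{ or }g(p)\in F_A\}$; this guarantees that the condition $g(p)\in F_A$ is tested on every input. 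Apart from the transformation-function trick for handling the right head, which is the only nonroutine ingredient, correctness is established by an induction on the length of the computation.
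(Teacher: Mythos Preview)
Your argument is correct. The product construction with the pair $(p,g)\in P\times P^{P}$ is exactly the right device to reconcile the one-directional scan of the $\dfa$ with the two-ended consumption of $M$: the update $g\mapsto g\circ\delta_A(\cdot,a)$ when the right head reads~$a$ is the key point, and your invariant $\delta_A^*(p_0,uv)=g(p)$ follows by a routine induction. Since the additional bookkeeping does not touch the pushdown, the signature and the input-driven behaviour are preserved, and determinism is preserved because the updates of $p$ and $g$ are functional. Your use of Lemma~\ref{lemm:complete-input} before the union construction is necessary and sufficient; for nondeterministic~$M$ it guarantees at least one run reaching the end of the input, which is all the ``or'' acceptance condition needs.

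As for a comparison: the paper states Theorem~\ref{theo:ndet-closed-regunion} without proof, presumably regarding it as folklore. Your construction is the natural one and is what the authors would have in mind; the transformation-monoid trick for the right head is standard in the literature on two-way and sweeping automata, so there is no divergence in method to report.
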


Next, we consider the closure under homomorphism.

\begin{theorem}\label{theo:ndet-not-hom}
  Both families $\lfam(\diddhpda)$ and $\lfam(\niddhpda)$ are
  \emph{not} closed under (length preserving) homomorphisms.
\end{theorem}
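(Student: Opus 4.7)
The plan is to exhibit a language $L\in\lfam(\diddhpda)$ together with a length-preserving homomorphism $h$ such that $h(L)\notin\lfam(\niddhpda)$. Since $\lfam(\diddhpda)\subseteq\lfam(\niddhpda)$, this immediately proves non-closure for both families.

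As source language I take the marked variant
\[
L = \{\, \bar{a}^n b^m \hat{a}^n \mid n\geq 1,\ m\geq 0 \,\}
\]
over the alphabet $\{\bar{a},b,\hat{a}\}$, equipped with the signature $\Sigma_D=\{\bar{a}\}$, $\Sigma_R=\{\hat{a}\}$, and $\Sigma_N=\{b\}$. An accepting \diddhpda\ pushes an $A$ on every $\bar{a}$ read by the left head, pops an $A$ on every $\hat{a}$ read by the right head, and treats $b$ as a no-op; a small amount of state bookkeeping enforces the block structure $\bar{a}^+ b^* \hat{a}^+$ and requires the pushdown to be empty at the end of the computation, so that $|\bar{a}|=|\hat{a}|$ is verified. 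Let $h$ be the length-preserving homomorphism defined by $h(\bar{a})=h(\hat{a})=a$ and $h(b)=b$; then
\[
h(L) = \{\, a^n b^m a^n \mid n\geq 1,\ m\geq 0\,\}.
\]

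It remains to show that $h(L)\notin\lfam(\niddhpda)$. Suppose otherwise. By Theorem~\ref{theo:ndet-closed-regunion} the intersection
\[
h(L)\cap a^+ b a^+ = \{\, a^n b a^n \mid n\geq 1\,\}
\]
also belongs to $\lfam(\niddhpda)$. I rule this out by a case analysis on the signature of a putative accepting $\niddhpda$: if $a\in\Sigma_N$ or $a\in\Sigma_R$, then the pushdown can be altered only by the single letter~$b$ and therefore stays bounded in size, so the device is equivalent to a finite-state machine and cannot recognize the non-regular language $\{a^n b a^n\}$. The remaining case $a\in\Sigma_D$ is the technical heart of the argument: every $a$ unconditionally pushes a symbol regardless of which head reads it, so the pushdown content depends only on the total number of $a$'s consumed and not on how this total splits between the two heads. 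A pigeonhole argument on the finite state set produces $n_1\neq n_2$ and accepting computations of $a^{n_1} b a^{n_1}$ and $a^{n_2} b a^{n_2}$ that agree on the state reached once the left head has finished reading $a^n b$; splicing the left-head prefix of the first computation with the right-head suffix of the second then yields an accepting computation on $a^{n_1} b a^{n_2}\notin\{a^n b a^n\}$, the desired contradiction. This cut-and-paste step, in the spirit of the proof of Lemma~\ref{lem:lang-simple-not}, is the main obstacle of the argument.
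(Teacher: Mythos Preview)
Your proposal has a fatal error: the target language $h(L)=\{a^n b^m a^n\mid n\ge 1,\ m\ge 0\}$, and hence also its regular slice $\{a^n b a^n\mid n\ge 1\}$, \emph{does} belong to $\lfam(\diddhpda)$. Take the signature $\Sigma_D=\{a\}$, $\Sigma_N=\{b\}$ and let the automaton alternate its two heads, reading one $a$ from the left and then one $a$ from the right; when the left head meets a $b$ it switches to a state that consumes the remaining $b$'s from the left and accepts (with the state after each right move also accepting, to cover $m=0$). The pushdown is filled with dummy symbols but never consulted. Two heads are exactly enough to match two outer $a$-blocks of equal length, so the input-driven restriction is irrelevant here.

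The place where your argument breaks is the case $a\in\Sigma_D$. First, the assertion that ``the pushdown content depends only on the total number of $a$'s consumed'' is false: the symbol pushed may depend on the current state and on the current top of the stack. Second, and more importantly, the cut-and-paste step does not go through. The moves of the two heads are interleaved and share both the state and the pushdown, so there is no well-defined ``left-head prefix'' separable from a ``right-head suffix''. At the moment the left head has just consumed $a^{n}b$, the right head has already consumed some $j_n$ many $a$'s; pigeonholing on the state alone gives no control over $j_n$, so pasting the suffix of the $n_2$-computation after the prefix of the $n_1$-computation on input $a^{n_1}ba^{n_2}$ leaves the wrong number of symbols to be read. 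This is not a repairable detail, because the statement you are trying to establish in that case is simply false.

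By contrast, the paper's witness maps to $\{a^n\dollar a^n\dollar a^n\mid n\ge 1\}$, which requires \emph{three} equal blocks to be compared. There the two heads alone are genuinely insufficient, and the case analysis on the signature of $a$ succeeds because the only non-$a$ symbols (two occurrences of $\dollar$) contribute at most a bounded amount of pushdown activity, so the pushdown can be absorbed into the finite control.
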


\begin{proof}
  Consider the language $L=\{\,a^n\dollar b^n\dollar c^n\mid n\ge
  1\,\}$. It is easy to show that~$L$ is accepted by some \diddhpda\
  with signature $\Sigma_N=\{a,\dollar\}$, $\Sigma_D=\{c\}$, and
  $\Sigma_R=\{b\}$. The details are left to the reader. Further
  consider the homomorphism~$h$ defined by $h(a)=a$,\
  $h(\dollar)=\dollar$,\ $h(b)=a$, and $h(c)=a$ that leads to the
  language $h(L)=\{ a^n\dollar a^n\dollar a^n\mid n\ge 1\}$. We show
  that $h(L)$ cannot be accepted by any \niddhpda.

  Assume to the contrary that there is an \niddhpda~$M$ accepting the
  language~$h(L)$. Observe, that the input contains only the
  letters~$a$ and two~$\dollar$s. Then we consider three cases,
  according to which set of the signature the letter~$a$ belongs to:
  \begin{enumerate}
  \item Letter $a\in\Sigma_N$. Then~$M$ possibly can use the pushdown
    store only for the two~$\dollar$ letters. In this case, the whole
    computation of~$M$ can be mimicked by a finite automaton.
  \item Letter $a\in\Sigma_D$. Then the pushdown of~$M$ can
    arbitrarily increase in height during a computation (if the
    $a$-blocks on both sides of the word are long enough), and can be
    decreased at most twice with the help of the
    letters~$\dollar$. Again, the whole computation of~$M$ can be
    simulated by a finite automaton.
  \item Letter $a\in\Sigma_R$. Again, the computation of~$M$ can be
    done by a finite state machine, since the pushdown height is
    bounded by two and can be stored in the finite control of an
    automaton. The letters~$a$ force a pop and the letters~$\dollar$
    may increase the pushdown height by at most two.
  \end{enumerate}
  By our consideration we conclude that~$M$ can always be replaced by
  a finite state automaton and therefore the language~$h(L)$ is
  regular, which is a contradiction to the pumping lemma of regular
  languages and to our above given assumption. Hence~$h(L)$ cannot be
  accepted by any \niddhpda.
\end{proof}

From the Boolean operations, the union operation applied to
$\lfam(\niddhpda)$ is still missing.

\begin{theorem}\label{theo:ndet-not-union}
The family 
$\lfam(\niddhpda)$ is \emph{not} closed under union.
\end{theorem}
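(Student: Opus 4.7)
The plan is to exhibit two languages $L',L''\in\lfam(\niddhpda)$ whose union does not lie in $\lfam(\niddhpda)$. The natural candidates are the \diddhpda-languages
\[
L' = \{\,a^n b^n c^n d^m e^\ell \mid n,m,\ell\ge 1\,\} \quad\text{and}\quad
L'' = \{\,a^m b^\ell c^n d^n e^n \mid n,m,\ell\ge 1\,\}
\]
already used in the proof of Theorem~\ref{theo:nclosed-intersection}; their intersection $L'\cap L''=L_1=\{\,a^nb^nc^nd^ne^n\mid n\ge 1\,\}$ is known not to be accepted by any double-head pushdown automaton.

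First I would verify $L',L''\in\lfam(\niddhpda)$ by recycling the constructions from Theorem~\ref{theo:nclosed-intersection}. Then, assuming for contradiction that some \niddhpda~$M$ accepts $L'\cup L''$, closure under intersection with regular languages (Theorem~\ref{theo:ndet-closed-regunion}) lets me restrict attention to inputs in $a^+b^+c^+d^+e^+$, on which acceptance reduces to ``$n_1=n_2=n_3$ or $n_3=n_4=n_5$''. Fixing an accepting computation of~$M$ on a sufficiently long word $a^nb^nc^nd^ne^n\in L_1$, a double pigeonhole argument on the visited configurations should extract two iterable loops, one that inflates the $a$-block and one that inflates the $e$-block. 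Iterating both at once would yield an accepted word $a^{n+s}b^nc^nd^ne^{n+t}$ with $s,t\ge 1$ lying in neither~$L'$ nor~$L''$, contradicting $L(M)=L'\cup L''$.

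The hard part is turning this intuition into a genuine double pumping lemma for \niddhpda. Because both heads share one input-driven pushdown store, the $a$-loop and the $e$-loop can interact through the stack. I plan to split by the signature classes of~$a$ and~$e$: if $a\in\Sigma_N$ or $e\in\Sigma_N$ the corresponding loop is stack-free and trivially repeatable, and otherwise the two loops happen in disjoint phases of the computation (one head processing $a$'s or $e$'s while the other is idle or working on $\Sigma_N$-symbols), so their stack effects do not need to cancel each other and can be iterated independently.

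A secondary worry is that the candidates $L',L''$ may themselves admit a common \niddhpda\ -- for instance, under the signature $\Sigma_D=\{c\}$, $\Sigma_R=\{b,d\}$, $\Sigma_N=\{a,e\}$ each one can be verified, and a nondeterministic machine could simply guess which of $L'$ or $L''$ is being read. Should this turn out to be the case, my fallback is to decorate the two languages with disjoint end-markers that pin each of them to incompatible signature requirements on a shared letter, reducing the non-closure to a finite case analysis over the possible partitions $(\Sigma_D,\Sigma_R,\Sigma_N)$ of the enlarged alphabet.
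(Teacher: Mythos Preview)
Your secondary worry is not secondary at all---it is fatal to the main line. Under the signature $\Sigma_D=\{c\}$, $\Sigma_R=\{b,d\}$, $\Sigma_N=\{a,e\}$ that you yourself propose, the union $L'\cup L''$ \emph{is} accepted by an $\niddhpda$: the machine guesses in its first step which of the two languages to verify and then runs the corresponding $\diddhpda$ routine from Theorem~\ref{theo:nclosed-intersection}. The extra pop-letter ($d$ in the $L'$ routine, $b$ in the $L''$ routine) is harmless because it is read while the pushdown store is still empty. Consequently no pumping argument on $L'\cup L''$ can ever yield a contradiction, and the double-loop analysis you sketch is aimed at a false target. The fallback with end-markers is only a hope; you would still have to run a case analysis over all signatures of an alphabet with at least seven letters, and nothing in your outline explains why no single signature could serve both marked languages simultaneously.

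The paper sidesteps all of this by working over a two-letter alphabet. It takes $L=\{a^nb^{2n}a^n\mid n\ge1\}$ and its letter-swap $\{b^na^{2n}b^n\mid n\ge1\}$, each a $\diddhpda$ language. With only two symbols the signature case analysis is immediate: if either letter lies in $\Sigma_N$, or both lie in $\Sigma_D$, or both in $\Sigma_R$, then as in the proof of Theorem~\ref{theo:ndet-not-hom} the pushdown height is bounded and the machine is effectively finite-state, contradicting non-regularity of the union. Hence one of $a,b$ pushes and the other pops; say $a\in\Sigma_D$, $b\in\Sigma_R$. On input $b^na^{2n}b^n$ with $n>|Q|$ the first $|Q|$ steps read only $b$'s from either end and keep the pushdown empty, so some state repeats with empty stack; excising that loop produces an accepted word $b^{n-i'}a^{2n}b^{n-j'}$ with $i'+j'\ge1$, which lies in neither component. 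The small alphabet is precisely what collapses the signature case analysis and makes the argument a paragraph long.
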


\begin{proof}
  Consider the language $L=\{\,a^nb^{2n}a^n\mid n\ge 1\,\}$. Using the
  signature $\Sigma_N=\emptyset$,\ $\Sigma_D=\{a\}$, and
  $\Sigma_R=\{b\}$ it is not hard to see that~$L$ is accepted by a
  \diddhpda. Note that the $a^n$-prefix and -suffix of the input word
  is compared by the use of the two input heads, while the
  $b^{2n}$-infix is checked against the content of the pushdown, which
  is previously filled by reading the $a$'s from the input. Similarly,
  the language where the~$a$'s and~$b$'s are exchanged, that is,
  $L'=\{\,b^na^{2n}b^n\mid n\geq \,\}$ is also accepted by some \diddhpda.
 
  Next, we consider the union $L\cup L'$. We show that it cannot be
  accepted by any \niddhpda. Assume to the contrary that there is an
  \niddhpda~$M''=\langle Q'',\{a,b\},\Gamma,
  q_,F,\bot,\delta_D,\delta_R,\delta_n\rangle$ that accepts the
  language $L\cup L'$, that is, $L(M'')=L\cup L'$.  With the same
  argumentation as in the proof of Theorem~\ref{theo:ndet-not-hom} we
  conclude that~$M''$ cannot accept $L\cup L'$ without using the
  pushdown store. Thus, one of the two input symbols~$a$ or~$b$
  force~$M''$ to push and the other symbol to pop. W.l.o.g. we assume
  that~$\Sigma_D=\{a\}$ and $\Sigma_R=\{b\}$; the other case can be
  treated in a similar way.
  Recall that~$Q''$ is the state set of~$M''$. Then consider the input
  word $w=b^na^{2n}b^n$, for $n>|Q''|$. Note that while the input heads
  cross over the $b^n$-prefix and -suffix of~$w$ the automaton~$M''$
  is forced to pop from the pushdown store and thus empties it.
  Since $w\in L\cup L'$, there is an accepting computation of~$M''$
  on~$w$, which is of the form
  $$(q_0,b^na^{2n}b^n,\lambda)\vdash^*(s,b^{n-i}a^{2n}b^{n-j},\lambda)\vdash^*
  (s,b^{n-i-i'}a^{2n}b^{n-j-j'},\lambda)\vdash^* (q_f,\lambda,\gamma),$$
  where $s\in Q''$,\ $q_f\in F$, $\gamma\in\Gamma^*$, and moreover,
  $i+i'\leq n$ and $j+j'\leq n$ and $i'+j'\geq 1$. But then by cutting
  out the loop computation on the state~$s$ also the word
  $b^{n-i'}a^{2n}b^{n-j'}$ is accepted by~$M''$ \textit{via} the
  computation
  $$(q_0,b^{n-i'}a^{2n}b^{n-j'},\lambda)\vdash^*
  (s,b^{n-i-i'}a^{2n}b^{n-j-j'},\lambda)\vdash^* (q_f,\lambda,\gamma).$$
  Since this word is not a member of $L\cup L'$ we get a contradiction
  to our assumption. Therefore, the language~$L\cup L'$ cannot be accepted
  by any \niddhpda. Thus, the language family $\lfam(\niddhpda)$
  is not closed under union.
\end{proof}

For the inverse homomorphism we also get a non-closure result.

\begin{theorem}\label{theo:ndet-not-closed-inverse}
  Both families $\lfam(\diddhpda)$ and $\lfam(\niddhpda)$ are
  \emph{not} closed under inverse homomorphisms.
\end{theorem}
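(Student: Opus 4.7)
The plan is to establish both non-closures simultaneously by exhibiting a single language $L \in \lfam(\diddhpda)$ and a homomorphism $h$ such that $h^{-1}(L) \notin \lfam(\niddhpda)$; by the inclusion $\lfam(\diddhpda) \subseteq \lfam(\niddhpda)$, this one counterexample settles both claims at once.

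The strategy exploits the rigidity of the signature: each letter of the new alphabet $\Delta$ must be placed into exactly one of $\Sigma_D$, $\Sigma_R$, or $\Sigma_N$ in any candidate acceptor for $h^{-1}(L)$. Accordingly, I would pick $L$ over an alphabet $\Sigma$ whose \diddhpda\ acceptance genuinely uses all three signature classes, and then choose $h\colon \Delta^* \to \Sigma^*$ so that $h^{-1}(L)$ requires contradictory pushdown behaviours for the same input letter. A natural candidate for $L$ is the language $\{\,a^n \dollar b^n \dollar c^n \mid n \geq 1\,\}$ from the proof of Theorem~\ref{theo:ndet-not-hom}, which is accepted by a \diddhpda\ with $\Sigma_N = \{a, \dollar\}$, $\Sigma_D = \{c\}$, and $\Sigma_R = \{b\}$. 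The aim is to design $h$ so that, possibly after intersecting with a suitable regular language (which is permitted by Theorem~\ref{theo:ndet-closed-regunion}), the language $h^{-1}(L)$ reduces to (a renaming of) the language $\{\,e^n \dollar e^n \dollar e^n \mid n \geq 1\,\}$ that the proof of Theorem~\ref{theo:ndet-not-hom} shows does not belong to $\lfam(\niddhpda)$.

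Given such $L$ and $h$, the contradiction step will mirror the case analyses in the proofs of Theorem~\ref{theo:ndet-not-hom} and Theorem~\ref{theo:ndet-not-union}: assume a hypothetical \niddhpda\ $M$ accepts $h^{-1}(L)$, split into cases according to the signature class assigned to each letter of $\Delta$, and in every case use a pumping-style argument to confine the behaviour of $M$ to that of a finite-state device, contradicting the non-regularity inherited from $\{\,e^n \dollar e^n \dollar e^n \mid n \geq 1\,\}$. The main obstacle will be to choose $h$ so that $h^{-1}(L)$ is simultaneously non-regular, signature-incompatible, and close enough to the target language for the case analysis to go through; naive choices such as mapping a single new letter to an entire block $abc$ tend to yield either an empty inverse image or a language already in $\lfam(\diddhpda)$ under a natural signature, so genuine care is needed in mixing signature classes of $\Sigma$ within the images $h(\delta)$ for $\delta \in \Delta$.
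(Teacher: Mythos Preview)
Your high-level scheme --- exhibit a single $L\in\lfam(\diddhpda)$ and a homomorphism $h$ with $h^{-1}(L)\notin\lfam(\niddhpda)$, then invoke $\lfam(\diddhpda)\subseteq\lfam(\niddhpda)$ --- is exactly the right shape, and it is also the shape of the paper's proof. The gap is that you never actually produce~$h$. With your chosen $L=\{a^n\dollar b^n\dollar c^n\mid n\ge 1\}$ this is not a matter of ``genuine care'': inverse homomorphism cannot merge the three distinct letters $a,b,c$ into a single letter~$e$, because $h(e)$ is a fixed word and $h^{-1}$ only selects preimages, it does not relabel symbols position-dependently. Any $h$ whose image alphabet covers $\{a,b,c\}$ either keeps the three letters distinguishable in the preimage (so the obvious signature still works and $h^{-1}(L)\in\lfam(\diddhpda)$), or loses one of the blocks entirely and gives a finite or empty preimage. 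Intersecting with a regular language afterwards does not help, since the obstacle is getting the right $h^{-1}(L)$ in the first place. So as written the plan does not close.

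The paper avoids the obstacle by a different choice of $L$: it takes $L=\{a^n b^{2n} a^n\mid n\ge 1\}\in\lfam(\diddhpda)$ (established in the proof of Theorem~\ref{theo:ndet-not-union}) and the homomorphism $h(a)=a$, $h(b)=bb$. Then $h^{-1}(L)=\{a^n b^n a^n\mid n\ge 1\}=L_{ta}$, and Lemma~\ref{lem:lang-simple-not} already says $L_{ta}\notin\lfam(\niddhpda)$. No fresh signature case analysis or pumping argument is needed --- the impossibility is quoted, not reproved. The key trick you were missing is to start from a language that already has only two letters and repeated outer blocks, so that a length-changing homomorphism on the middle letter adjusts the block ratio and lands exactly on the known hard language~$L_{ta}$.
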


\begin{proof}
  Consider the \diddhpda\ language $L=\{ a^nb^{2n}a^n \mid n\ge 1\}$
  from Lemma~\ref{theo:ndet-not-union}. Let~$h$ be the homomorphism
  defined by $h(a)=a$ and $h(b)=bb$. Then $h^{-1}(L)=\{\,a^nb^na^n\mid
  n\geq 1\,\}$. This is the language~$L_{ta}$ from
  Lemma~\ref{lem:lang-simple-not} that cannot be accepted by any
  \niddhpda. This shows that both language families
  $\lfam(\diddhpda)$ and $\lfam(\niddhpda)$ are not closed under
  inverse homomorphisms.
\end{proof}

\section{Decidability questions}
\label{sec:dec}

In this section, we investigate the usually studied decidability questions for deterministic and nondeterministic
input-driven $\dhpda$s.
It turns out that the results are similar to those obtained for conventional deterministic and nondeterministic pushdown automata.
In particular, we obtain the decidability of emptiness and finiteness for $\diddhpda$s and $\niddhpda$s
as well as the decidability of equivalence with a regular set, inclusion in a regular set, and inclusion of a regular set.
On the other hand, inclusion turns out to be not even semidecidable for $\diddhpda$s and universality, equivalence, and regularity
are not semidecidable for $\niddhpda$ as well. Finally, the decidability and non-semidecidability results can be translated
to hold for $\doiddhpda$s correspondingly.

\begin{theorem}\label{thm:dec:emptiness}
Let $M$ be an $\ndhpda$. Then, it is decidable whether or not $L(M)$ is empty or finite.
\end{theorem}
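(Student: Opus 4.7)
The plan is to reduce emptiness and finiteness for an $\ndhpda$~$M$ to the corresponding decidable problems for ordinary nondeterministic pushdown automata. I would encode each accepting computation of $M$ as a word over the marked alphabet $\hat\Sigma = \Sigma \times \{L,R\}$, where the tag $L$ (respectively $R$) records that the corresponding input symbol is consumed by the left (respectively right) head of $M$. From $M$ one can then construct an ordinary nondeterministic pushdown automaton $M'$ with the same state set, stack alphabet, initial state, accepting states, and bottom-of-stack marker, whose transitions are in one-to-one correspondence with those of $M$: a left-head transition of $M$ on~$a$ becomes a transition of~$M'$ that reads $(a,L)$, and a right-head transition on~$a$ becomes one that reads $(a,R)$; the stack action and state change are inherited unchanged.

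Each accepting computation of~$M$ on input~$w$ then corresponds to a word $\hat w \in L(M')$ whose $L$-projection $u$ is exactly the portion of~$w$ read by the left head (in the order of consumption) and whose $R$-projection $v$ is the portion read by the right head, so that $w=uv^R$. Conversely, every $\hat w \in L(M')$ reconstructs an accepted input~$w$ by this same rule. Hence $L(M)=\emptyset$ if and only if $L(M')=\emptyset$, and since emptiness of \npda\ languages is classically decidable, emptiness for $\ndhpda$s follows.

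For finiteness, a little more care is required, since distinct $\hat w$'s can project to the same input~$w$. The key observation is that no transition of~$M$ is purely $\lambda$ (the definition forbids the second and third arguments from being simultaneously $\lambda$), so every accepting computation on an input of length~$n$ consists of exactly~$n$ steps. Consequently each $w\in L(M)$ admits only finitely many accepting computations of~$M$, and therefore only finitely many preimages $\hat w$ under the correspondence above (crudely bounded by $(n+1)\binom{2n}{n}$ for $|w|=n$). The map $\hat w \mapsto w$ is thus finite-to-one, which implies that $L(M)$ is finite if and only if $L(M')$ is finite; the latter is again decidable by the classical theory of context-free languages.

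The step I expect to be the main obstacle is the careful bookkeeping needed to verify the correspondence between computations of~$M$ and of~$M'$, in particular showing that the shared pushdown store behaves identically under both viewpoints in spite of the fact that the two heads of~$M$ scan opposite ends of the tape. Once that bijection is established, both reductions amount to invoking standard decidability results for nondeterministic pushdown automata.
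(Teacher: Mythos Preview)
Your proof is correct and takes essentially the same route as the paper: both reduce to the decidable emptiness and finiteness problems for ordinary $\npda$. The paper's proof is a one-line citation of Nagy~\cite{Nagy:2015:afothpa}, who shows that from any $\ndhpda$ one can effectively construct an $\npda$ accepting a letter-equivalent context-free language; since letter-equivalence preserves emptiness and finiteness, the claim follows. Your tagged-alphabet construction over $\Sigma\times\{L,R\}$ is precisely the natural way to realize Nagy's reduction (forgetting the $L/R$ tag yields the letter-equivalent language over~$\Sigma$), so you have in effect re-derived the cited lemma rather than appealed to it; your finite-to-one argument for finiteness is an equally valid alternative to the letter-equivalence observation, and your remark that every step of an $\ndhpda$ consumes exactly one input symbol is exactly what makes both arguments go through.
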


\begin{proof}
  In~\cite{Nagy:2015:afothpa} Nagy shows that for every $\ndhpda$ a
  classical $\npda$ accepting a letter-equivalent context-free
  language can effectively be constructed.  Thus, the emptiness and
  finiteness problems for an $\ndhpda$ can be reduced to the
  corresponding problems for an $\npda$ which are known to be
  decidable.
\end{proof}

\begin{corollary}\label{cor:dec:emptiness}
Let $M$ be an $\niddhpda$ or $\diddhpda$. Then, it is decidable whether or not $L(M)$ is empty or finite.
\end{corollary}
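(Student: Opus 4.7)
The plan is to reduce the problem to the one already settled in Theorem~\ref{thm:dec:emptiness}. The key observation is that the class of $\niddhpda$s (and hence of $\diddhpda$s) is syntactically a subclass of $\ndhpda$s: an input-driven device is nothing more than an ordinary nondeterministic double-head pushdown automaton whose transition function happens to respect a signature on the input alphabet. Therefore, given an $\niddhpda$ $M = \langle Q,\Sigma,\Gamma,q_0,F,\bot,\delta_D,\delta_R,\delta_N\rangle$, I would first construct an $\ndhpda$ $M'$ with the same state set, input alphabet, pushdown alphabet, initial and accepting states, and with a single transition function $\delta$ that is obtained as the disjoint union of $\delta_D$, $\delta_R$, and $\delta_N$ (pushes, pops, and state-only moves are all legitimate $\ndhpda$ actions). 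This construction is effective and trivially preserves the accepted language, i.e., $L(M') = L(M)$.

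Once the language equality is established, the corollary is immediate: by Theorem~\ref{thm:dec:emptiness} one can decide whether $L(M')$ is empty and whether $L(M')$ is finite, and since $L(M) = L(M')$ this settles both questions for the given $\niddhpda$. A $\diddhpda$ is a special case of an $\niddhpda$, so the same reduction applies without modification.

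There is essentially no obstacle here; the only point worth stating carefully is that the signature constraints on $M$ play no role in the decision procedure, because they can only restrict, never enlarge, the set of computations available to $M'$. Thus the input-driven restriction is preserved for free in the reduction, and no further construction (e.g.\ letter-equivalent $\npda$ as in the proof of Theorem~\ref{thm:dec:emptiness}) has to be redone on the input-driven side.
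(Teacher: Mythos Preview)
Your proposal is correct and matches the paper's intent: the corollary is stated without proof in the paper precisely because every $\niddhpda$ (and hence every $\diddhpda$) is already an $\ndhpda$, so Theorem~\ref{thm:dec:emptiness} applies directly. Your explicit description of merging $\delta_D,\delta_R,\delta_N$ into a single transition function is exactly the trivial observation underlying this, and nothing more is needed.
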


To obtain undecidability results we will use the technique of \emph{valid computations} of Turing machines which
is presented, for example, in~\cite{Hopcroft:1979:itatlc:book}.
This technique allows to show that some questions are not only undecidable, but moreover not semidecidable,
where we say that a problem is \emph{semidecidable}
if and only if the set of all instances for which the answer is ``yes'' is recursively enumerable (see, for example,~\cite{Hopcroft:1979:itatlc:book}).
Let $M=\langle Q,\Sigma,T,\delta,q_0,B,F \rangle$ be a deterministic Turing machine, 
where $T$ is the set of tape symbols including the set of input symbols~$\Sigma$ and the blank symbol $B$, 
$Q$ is the finite set of states and $F\subseteq Q$ is the set of final states.  
The initial state is $q_{0}$ and $\delta$ is the transition function. 
Without loss of generality, we assume that Turing machines 
can halt only after an odd number of moves, halt whenever they enter an
accepting state, make at least three
moves, and cannot print blanks.
At any instant during a computation,~$M$ can be completely described by an \emph{instantaneous description} (ID) which is a string 
$t q t'\in T^{*}Q T^{*}$ with the following meaning: $M$ is in the
state~$q$, the non-blank tape content is the string $tt'$, and the head scans the 
first symbol of $t'$. The initial ID of $M$ on input $x\in\Sigma^{*}$ is $w_0=q_{0}x$. 
An ID is accepting whenever it belongs to $T^{*}FT^{*}$.
The set $\valc(M)$ of valid (accepting) computations of~$M$ consists of all
finite strings
of the form 
$
w_0\border w_2\border \cdots \border w_{2n} \dollar w_{2n+1}^R\border \cdots \border w^R_{3}\border w^R_{1}
$
where $\border,\dollar \notin T \cup Q$,
$w_i$, $0\leq i\leq 2n+1$, are instantaneous description of~$M$, 
$w_0$ is an initial ID,
$w_{2n+1}$ is an accepting (hence halting) configuration,  
$w_{i+1}$ is the successor configuration of~$w_i$, $0\leq i\leq 2n$.
The set of \emph{invalid computations} $\invalc(M)$ is the
complement of~$\valc(M)$ with respect to the alphabet 
$T \cup Q \cup \{\border,\dollar\}$. 

\begin{theorem}
Let $M_1$ and $M_2$ be two $\diddhpda$s. Then, the question $L(M_1) \cap L(M_2)=\emptyset$ is not semidecidable.
\end{theorem}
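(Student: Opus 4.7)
The strategy is to reduce from the emptiness problem for deterministic Turing machines, which is $\Pi_1^0$-complete and therefore not semidecidable. Given a deterministic Turing machine $M$, the goal is to construct two $\diddhpda$s $M_1$ and $M_2$ with $L(M_1)\cap L(M_2)=\valc(M)$, so that the intersection is empty if and only if $L(M)=\emptyset$. I will work with a slight variant of the encoding from the preliminaries tailored to the double-head reading order: introduce tagged copies $\overline{T}$, $\overline{Q}$, $\overline{\border}$ of the tape and state alphabets and of the separator, and represent an accepting computation $w_0,w_1,\dots,w_{2n+1}$ by
\[
w_0\border w_2\border\cdots\border w_{2n}\,\dollar\,\overline{w_{2n+1}}\overline{\border}\cdots\overline{\border}\overline{w_1}.
\]
The key observation is that, since the right head scans right-to-left, each tagged block $\overline{w_{2i+1}}$ is read in reverse symbol order, while the left head reads each $w_{2i}$ in the natural order.

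The automaton $M_1$ will use the signature $\Sigma_D=T\cup Q$, $\Sigma_R=\overline{T}\cup\overline{Q}$, and $\Sigma_N=\{\border,\overline{\border},\dollar\}$. The left head automatically pushes the symbols of $w_{2i}=a_1\cdots a_m$, placing $a_m$ on top, and the right head automatically pops while reading $\overline{b_m},\overline{b_{m-1}},\dots,\overline{b_1}$; hence the pushdown performs a position-by-position matching of $w_{2i}$ with $w_{2i+1}$. Using the standard valid-computations trick (tracking the location and contents of the short window around the TM state symbol in the pushdown symbols and the finite control), $M_1$ will verify that $w_{2i+1}$ is the successor configuration of $w_{2i}$ for every $i$, and will additionally enforce the overall syntactic form, the fact that $w_0$ is initial, and that $w_{2n+1}$ is accepting. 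The neutral transitions on the separators orchestrate the alternation of left-head and right-head phases deterministically. The automaton $M_2$ mirrors this with the opposite signature $\Sigma_D=\overline{T}\cup\overline{Q}$ and $\Sigma_R=T\cup Q$, so that the right head pushes while reading tagged symbols and the left head pops while reading untagged ones. $M_2$ runs in phases driven by its state: first the left head reads all of $w_0$, issuing pops against the empty pushdown (these carry no information since only $\bot$ is ever seen); then the right head reads $\overline{w_1}$ and pushes its symbols; next the left head reads $w_2$, popping and verifying that $w_2$ is the successor of the $w_1$ currently stored on the pushdown; the alternation continues over the pairs $(w_3,w_4),\dots,(w_{2n-1},w_{2n})$ until the heads meet at $\dollar$.

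Combining the two checks, a word lies in $L(M_1)\cap L(M_2)$ iff it matches the template and every successor relation and every endpoint condition holds, i.e., iff it belongs to $\valc(M)$; consequently $L(M_1)\cap L(M_2)=\emptyset$ iff $L(M)=\emptyset$, which is not semidecidable. The principal obstacle is arranging the construction so that corresponding blocks align in length and the pushdown is exactly emptied at each block boundary despite the rigid signature constraints; this is handled by padding all configurations to a uniform length and storing the current phase (which head is currently enabled) in the finite control, so that on every configuration the fixed signature forces a deterministic push/pop/state-change while the state determines which head is moved.
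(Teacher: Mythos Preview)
Your reduction is correct: encoding accepting computations so that $M_1$ verifies the even--odd successor steps and $M_2$ verifies the odd--even ones does yield $L(M_1)\cap L(M_2)=\valc(M)$, and the phase discipline you describe keeps both machines deterministic and input-driven. The padding to a common block length, the tagging of the right-hand symbols, and the use of opposite signatures for $M_1$ and $M_2$ all work as you outline.

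The paper, however, takes a markedly simpler route. It uses the encoding
\[
w_0\border w_2\border\cdots\border w_{2n}\,\dollar\, w_{2n+1}^R\border\cdots\border w_3^R\border w_1^R,
\]
i.e., the odd configurations are written \emph{reversed} rather than tagged. Because the right head moves leftwards, it then reads each $w_{2i+1}$ in its natural order at the same time the left head reads $w_{2i}$ in its natural order; the two heads therefore scan corresponding positions in lockstep and the successor relation is checked purely in the finite control. Consequently the paper sets $\Sigma_D=\Sigma_R=\emptyset$ and $\Sigma_N=T\cup Q\cup\{\border,\dollar\}$ for \emph{both} automata: no pushdown activity, no tagging, no padding, and a single shared signature. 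Your construction is sound but carries a lot of machinery (two opposite signatures, stack bookkeeping, uniform-length padding) that the reversal trick makes unnecessary; the paper's argument in fact establishes the slightly stronger statement that the problem is already non-semidecidable for pairs of $\diddhpda$s over a common, purely neutral signature.
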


\begin{proof}
We will first show that the set of valid computation $\valc(M)$ of a Turing machine $M$ 
is the intersection of two languages $L_1$ and $L_2$ where each language
is accepted by some $\diddhpda$. We define $L_1$ to consist of all strings of the form
$
w_0\border w_2\border \cdots \border w_{2n} \dollar w_{2n+1}^R\border \cdots \border w^R_{3}\border w^R_{1}
$,
where $w_{i+1}$ is the successor configuration of~$w_i$ for all even $0\leq i\leq 2n$.
Language $L_2$ is defined as the set of all strings of the form 
$
w_0\border w_2\border \cdots \border w_{2n} \dollar w_{2n+1}^R\border \cdots \border w^R_{3}\border w^R_{1}
$,
where $w_{i+1}$ is the successor configuration of~$w_i$ for all odd $1\leq i\leq 2n-1$. Moreover,
$w_0$ is an initial ID and $w_{2n+1}$ is an accepting ID. It is clear that $L_1 \cap L_2 = \valc(M)$.
Next, we sketch how $L_1$ can be accepted by some $\diddhpda$ $M_1$. The partition of the input alphabet
is $\Sigma_D=\Sigma_R=\emptyset$ and $\Sigma_N=T \cup Q \cup \{\border,\dollar\}$. Thus, we will not
make use of the pushdown store in our construction. The basic idea is that both heads of~$M_1$ move
successively to the right resp. left checking that the ID that is seen by the right head is indeed
the successor configuration seen by the left head. This is possible since the changes between a 
configuration and its successor configuration are only local and hence can be checked using the state
set of~$M_1$. Moreover, the state set is also used to check the correct format of the input, where
the left head checks the input part to the left of the marker $\dollar$, whereas the right head checks
the input part to the right of $\dollar$. The computation ends accepting when both heads meet at
the marker $\dollar$ and all previous checks have been successful. A $\diddhpda$ $M_2$ for $L_2$ works
similarly. First, the left head has to skip the initial ID $w_0$. Then, both heads of~$M_2$ move
successively to the right resp. left checking that the ID that is seen by the left head is indeed
the successor configuration seen by the right head. Again, the correct format of the input is implicitly
checked. When the left head has reached the marker~$\dollar$,
the right head has to skip the accepting ID $w_{2n+1}$ and the computation ends accepting when both heads meet at
the marker $\dollar$ and all previous checks have been successful.
Since $L_1 \cap L_2 = \valc(M)$ and the emptiness problem for Turing machines is not semidecidable
(see, for example,~\cite{Hopcroft:1979:itatlc:book}), the claim of the theorem follows.
\end{proof}

Since $\lfam(\diddhpda)$ is closed under complementation owing to Theorem~\ref{theo:clos-comp}, we immediately
obtain that the inclusion problem is not semidecidable.

\begin{corollary}
Let $M_1$ and $M_2$ be two $\diddhpda$s. Then, it is not semidecidable whether or not $L(M_1) \subseteq L(M_2)$.
\end{corollary}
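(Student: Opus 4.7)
The plan is a one-step reduction from the disjointness problem, whose non-semidecidability has just been established in the preceding theorem, to the complement of the inclusion problem. The bridge is the set-theoretic equivalence $L(M_1)\cap L(M_2)=\emptyset$ if and only if $L(M_1)\subseteq\overline{L(M_2)}$, which turns the two problems into one another as soon as complementation is effectively available on the class.

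First, given a $\diddhpda$ $M_2$, the plan is to apply Theorem~\ref{theo:clos-comp} to construct a $\diddhpda$ $M_2'$ with $L(M_2')=\overline{L(M_2)}$. The construction in that theorem is explicit: use Lemma~\ref{lemm:complete-input} to first force $M_2$ to consume the entire input, then swap accepting and non-accepting states. Hence the map $M_2\mapsto M_2'$ is computable, and we obtain the equivalence
\[
L(M_1)\cap L(M_2)=\emptyset \iff L(M_1)\subseteq L(M_2').
\]

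Now I would argue by contradiction. Suppose the inclusion problem for $\diddhpda$s were semidecidable, i.e.\ the set of pairs $(N_1,N_2)$ of $\diddhpda$s with $L(N_1)\subseteq L(N_2)$ were recursively enumerable. Composing this enumerator with the effective transformation $M_2\mapsto M_2'$ would yield an enumeration of all pairs $(M_1,M_2)$ with $L(M_1)\cap L(M_2)=\emptyset$, contradicting the preceding theorem. The argument is essentially a textbook reduction and no genuine obstacle arises; the only item worth recording is the effectivity of the complementation construction of Theorem~\ref{theo:clos-comp}, which is evident from its proof.
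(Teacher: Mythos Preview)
Your proposal is correct and follows exactly the route the paper takes: it reduces the non-semidecidable disjointness problem to inclusion via the identity $L(M_1)\cap L(M_2)=\emptyset \iff L(M_1)\subseteq \overline{L(M_2)}$, invoking the effective closure of $\lfam(\diddhpda)$ under complementation from Theorem~\ref{theo:clos-comp}. The paper states this in a single sentence, but the underlying argument is the same.
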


However, in case of regular languages we can decide inclusion and equivalence.

\begin{theorem}
Let $M$ be a $\diddhpda$ and $R$ be a regular language. Then, it is decidable whether or not $L(M)=R$, $R \subseteq L(M)$,
or $L(M) \subseteq R$.
\end{theorem}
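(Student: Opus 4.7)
The plan is to reduce each of the three questions to the emptiness problem for \diddhpda{}s, which is decidable by Corollary~\ref{cor:dec:emptiness}. The key observation is that all of the closure constructions we invoke are effective: the complementation construction of Theorem~\ref{theo:clos-comp} merely exchanges accepting and rejecting states (after forcing~$M$ to read its entire input via Lemma~\ref{lemm:complete-input}), and a product-style construction underlies Theorem~\ref{theo:ndet-closed-regunion} on intersection with regular languages. Hence from a $\diddhpda$ and a DFA for a regular language one can effectively build a $\diddhpda$ for the relevant Boolean combination, and then test emptiness.

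For the problem $L(M) \subseteq R$, I would use the equivalence
\[
L(M) \subseteq R \;\Longleftrightarrow\; L(M) \cap \overline{R} = \emptyset.
\]
Given a DFA for $R$, complement it to obtain a DFA for $\overline{R}$; by Theorem~\ref{theo:ndet-closed-regunion} one can effectively construct a $\diddhpda$ accepting $L(M) \cap \overline{R}$, and then invoke Corollary~\ref{cor:dec:emptiness}. For $R \subseteq L(M)$, I would dualize: using Theorem~\ref{theo:clos-comp}, construct a $\diddhpda$ for $\overline{L(M)}$, and then by Theorem~\ref{theo:ndet-closed-regunion} a $\diddhpda$ for $R \cap \overline{L(M)}$, whose emptiness is equivalent to $R \subseteq L(M)$ and is decidable by Corollary~\ref{cor:dec:emptiness}. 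Finally, $L(M) = R$ is decided by combining both inclusion tests.

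There is no real obstacle in this argument; the work lies entirely in noting that the closure constructions already given in Section~\ref{sec:closure} and the construction of Lemma~\ref{lemm:complete-input} are effective. The only subtle point worth mentioning is that closure under complementation is unavailable on the nondeterministic side (cf.\ Theorem~\ref{theo:ndet-not-closed-int}), which is exactly why the argument for $R \subseteq L(M)$ is specific to the deterministic model and why the corresponding question for $\niddhpda$s is left open in the paper.
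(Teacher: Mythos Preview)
Your proposal is correct and follows essentially the same approach as the paper: reduce each question to an emptiness test via the equivalences $L(M)\subseteq R \Leftrightarrow L(M)\cap\overline{R}=\emptyset$ and $R\subseteq L(M)\Leftrightarrow R\cap\overline{L(M)}=\emptyset$, invoking closure under complementation and under intersection with regular sets together with the decidability of emptiness. Your version is slightly more explicit about the effectiveness of the constructions and about why the argument does not transfer to the nondeterministic case, but the underlying reasoning is identical to the paper's.
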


\begin{proof}
First, we note that $R \subseteq L(M)$ if and only if $R \cap \overline{L(M)}=\emptyset$
and that $L(M) \subseteq R$ if and only if $L(M) \cap \overline{R}=\emptyset$. Since 
$\lfam(\diddhpda)$ is closed under complementation and under intersection with regular languages
by Theorem~\ref{theo:clos-comp} and Theorem~\ref{theo:ndet-closed-regunion},
the regular languages are closed under complementation, and emptiness is decidable for $\diddhpda$s owing
to Theorem~\ref{thm:dec:emptiness}, all claims of the theorem follow.
\end{proof}

The decidability of the latter questions gets lost if the given $\iddhpda$ is nondeterministic, since in this case
even the universality question is not semidecidable.

\begin{theorem}\label{thm:undec:ndet}
Let $M$ be an $\niddhpda$. Then, the questions of universality, equivalence, and regularity are not semidecidable.
\end{theorem}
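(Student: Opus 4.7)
The plan is to adapt the valid-computation technique of the preceding theorem to reduce the emptiness problem for deterministic Turing machines, which is well-known to be not semidecidable, to each of the three decision problems in turn. The first step is to construct, for any deterministic Turing machine $M$, an $\niddhpda$ accepting $\invalc(M)$. Recall from the proof of the preceding theorem that $\valc(M)=L_1\cap L_2$ with $L_1,L_2\in\lfam(\diddhpda)$, and that both sub-automata use the trivial signature $\Sigma_N=T\cup Q\cup\{\border,\dollar\}$, $\Sigma_D=\Sigma_R=\emptyset$, that is, neither consults the pushdown store. Closure of $\lfam(\diddhpda)$ under complementation (Theorem~\ref{theo:clos-comp}) produces $\diddhpda$s for $\overline{L_1}$ and $\overline{L_2}$ with the same trivial signature, whence $\invalc(M)=\overline{L_1}\cup\overline{L_2}$ is recognized by the $\niddhpda$ that in its first step nondeterministically chooses which of the two to simulate. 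This union is legitimate here, despite the general non-closure of $\lfam(\niddhpda)$ under union (Theorem~\ref{theo:ndet-not-union}), precisely because the shared signature has no pushdown interactions to reconcile.

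For universality the $\niddhpda$ $N$ obtained above satisfies $L(N)=\Sigma^*$ if and only if $\valc(M)=\emptyset$ if and only if $L(M)=\emptyset$; thus non-semidecidability is inherited from Turing machine emptiness. Equivalence then reduces to universality by comparing with a fixed trivial $\niddhpda$ accepting $\Sigma^*$. For regularity I would first replace $M$ by a padded machine $\widetilde M$ whose language is infinite exactly when $L(M)\neq\emptyset$, for instance accepting an encoded pair $\langle x,1^n\rangle$ iff $M$ accepts $x$ within $n$ steps. Then $L(M)=\emptyset$ yields $\invalc(\widetilde M)=\Sigma^*$, which is regular, whereas $L(M)\neq\emptyset$ yields an infinite $\valc(\widetilde M)$ to which a standard pumping argument applies: pumping any sufficiently long substring inside a valid computation breaks the local agreement between consecutive configurations, so $\valc(\widetilde M)$, and hence $\invalc(\widetilde M)$, is not regular.

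The principal obstacle is the first step, namely placing $\invalc(M)$ into $\lfam(\niddhpda)$ despite the non-closure under union; the escape is that the two factors from the preceding proof are accepted without any pushdown use, so their signatures trivially coincide and the union can be formed by a single nondeterministic initial choice. A subsidiary technicality is the pumping argument for non-regularity of $\valc(\widetilde M)$ when $L(M)\neq\emptyset$, which must verify that no matter which substring one pumps inside a long valid computation, the resulting string necessarily violates the successor relation at some position; this part, although standard, is where the bulk of the careful case analysis would sit.
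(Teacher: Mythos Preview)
Your proposal is correct and follows the standard route. Note, however, that the paper does not actually supply a proof of this theorem---the statement is given without argument---so there is no written proof to compare against in detail. Your construction of an $\niddhpda$ for $\invalc(M)=\overline{L_1}\cup\overline{L_2}$ is exactly the expected approach, and your observation that the union is unproblematic here because both constituents from the preceding proof use the trivial signature $\Sigma_N=T\cup Q\cup\{\border,\dollar\}$, $\Sigma_D=\Sigma_R=\emptyset$, is precisely the point that makes the construction go through despite Theorem~\ref{theo:ndet-not-union}. The reductions for universality and equivalence are immediate once this is in place.

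For regularity your padding trick is fine; one small simplification that removes the case analysis you anticipate is to arrange $\widetilde M$ so that every step strictly increases the length of the ID (for instance, by having $\widetilde M$ append a fresh marker symbol to the right end of its tape at every move). Then for any valid computation string long enough that the pumping window lies entirely inside $w_0$, deleting the pumped block shortens $w_0$ while leaving $w_1$ untouched, so the rigid length relation $|w_1|=|w_0|+1$ fails outright and no further case distinction is needed.
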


Owing to Theorem~\ref{thm:dec:emptiness} it is clear that emptiness and finiteness are decidable for $\ddoiddhpda$s and $\ndoiddhpda$s as well.
Since the language family accepted by $\ddoiddhpda$s is also closed under complementation and intersection with regular languages, we obtain
that decidable questions for $\diddhpda$s are also decidable for $\ddoiddhpda$s. On the other hand, the non-semidecidability results
obtained for $\iddhpda$s in the single mode obviously hold for the double mode as well.
It is currently an open problem whether equivalence and regularity are decidable for $\diddhpda$s or $\ddoiddhpda$s,
whereas both problems are known to be decidable for $\dpda$s.

\section*{Acknowledgment}
We would like to thank D\'{a}vid Angyal who brought double-head pushdown
automata close to us while his visit of our institute. He participated 
in the discussions and his ideas were significant contributions to this paper.
We consider him truly a co-author, but he insisted not to put his name on the list.

\bibliographystyle{eptcs}
\bibliography{idpda}

\begin{thebibliography}{10}
\providecommand{\bibitemdeclare}[2]{}
\providecommand{\surnamestart}{}
\providecommand{\surnameend}{}
\providecommand{\urlprefix}{Available at }
\providecommand{\url}[1]{\texttt{#1}}
\providecommand{\href}[2]{\texttt{#2}}
\providecommand{\urlalt}[2]{\href{#1}{#2}}
\providecommand{\doi}[1]{doi:\urlalt{http://dx.doi.org/#1}{#1}}
\providecommand{\bibinfo}[2]{#2}

\bibitemdeclare{article}{Alur:2009:answ}
\bibitem{Alur:2009:answ}
\bibinfo{author}{Rajeev \surnamestart Alur\surnameend} \&
  \bibinfo{author}{P.~\surnamestart Madhusudan\surnameend}
  (\bibinfo{year}{2009}): \emph{\bibinfo{title}{Adding nesting structure to
  words}}.
\newblock {\sl \bibinfo{journal}{J. ACM}} \bibinfo{volume}{56}, pp.
  \bibinfo{pages}{16:1--16:43}, \doi{10.1145/1516512.1516518}.

\bibitemdeclare{inproceedings}{bensch:2012:idsa:proc}
\bibitem{bensch:2012:idsa:proc}
\bibinfo{author}{Suna \surnamestart Bensch\surnameend}, \bibinfo{author}{Markus
  \surnamestart Holzer\surnameend}, \bibinfo{author}{Martin \surnamestart
  Kutrib\surnameend} \& \bibinfo{author}{Andreas \surnamestart
  Malcher\surnameend} (\bibinfo{year}{2012}):
  \emph{\bibinfo{title}{Input-Driven Stack Automata}}.
\newblock In \bibinfo{editor}{Jos C.~M. \surnamestart Baeten\surnameend},
  \bibinfo{editor}{Thomas \surnamestart Ball\surnameend} \&
  \bibinfo{editor}{Frank~S. \surnamestart de~Boer\surnameend}, editors: {\sl
  \bibinfo{booktitle}{Theoretical Computer Science (TCS 2012)}}, {\sl
  \bibinfo{series}{LNCS}} \bibinfo{volume}{7604},
  \bibinfo{publisher}{Springer}, pp. \bibinfo{pages}{28--42},
  \doi{10.1007/978-3-642-33475-7_3}.

\bibitemdeclare{inproceedings}{braunmuehl:1983:idlrlog}
\bibitem{braunmuehl:1983:idlrlog}
\bibinfo{author}{Burchard \surnamestart von Braunm{\"u}hl\surnameend} \&
  \bibinfo{author}{Rutger \surnamestart Verbeek\surnameend}
  (\bibinfo{year}{1983}): \emph{\bibinfo{title}{Input-Driven Languages are
  Recognized in {$\log n$} Space}}.
\newblock In \bibinfo{editor}{Marek \surnamestart Karpinski\surnameend},
  editor: {\sl \bibinfo{booktitle}{Fundamentals of Computation Theory (FCT
  1983)}}, {\sl \bibinfo{series}{LNCS}} \bibinfo{volume}{158},
  \bibinfo{publisher}{Springer}, pp. \bibinfo{pages}{40--51},
  \doi{10.1007/3-540-12689-9_92}.

\bibitemdeclare{article}{Buntrock:1998:gcslcrl:art}
\bibitem{Buntrock:1998:gcslcrl:art}
\bibinfo{author}{Gerhard \surnamestart Buntrock\surnameend} \&
  \bibinfo{author}{Friedrich \surnamestart Otto\surnameend}
  (\bibinfo{year}{1998}): \emph{\bibinfo{title}{Growing Context-Sensitive
  Languages and {C}hurch-{R}osser Languages}}.
\newblock {\sl \bibinfo{journal}{Inform. Comput.}}
  \bibinfo{volume}{141}(\bibinfo{number}{1}), pp. \bibinfo{pages}{1--36},
  \doi{10.1006/inco.1997.2681}.

\bibitemdeclare{article}{carotenuto:2016:omsvpda}
\bibitem{carotenuto:2016:omsvpda}
\bibinfo{author}{Dario \surnamestart Carotenuto\surnameend},
  \bibinfo{author}{Aniello \surnamestart Murano\surnameend} \&
  \bibinfo{author}{Adriano \surnamestart Peron\surnameend}
  (\bibinfo{year}{2016}): \emph{\bibinfo{title}{Ordered multi-stack visibly
  pushdown automata}}.
\newblock {\sl \bibinfo{journal}{Theoret. Comput. Sci.}} \bibinfo{volume}{656},
  pp. \bibinfo{pages}{1--26}, \doi{10.1016/j.tcs.2016.08.012}.

\bibitemdeclare{article}{dymond:1988:idllognd}
\bibitem{dymond:1988:idllognd}
\bibinfo{author}{Patrick~W. \surnamestart Dymond\surnameend}
  (\bibinfo{year}{1988}): \emph{\bibinfo{title}{Input-Driven Languages are in
  {$\log n$} Depth}}.
\newblock {\sl \bibinfo{journal}{Inform. Process. Lett.}} \bibinfo{volume}{26},
  pp. \bibinfo{pages}{247--250}, \doi{10.1016/0020-0190(88)90148-2}.

\bibitemdeclare{article}{harrison:1968:mtmhpda}
\bibitem{harrison:1968:mtmhpda}
\bibinfo{author}{Michael~A. \surnamestart Harrison\surnameend} \&
  \bibinfo{author}{Oscar~H. \surnamestart Ibarra\surnameend}
  (\bibinfo{year}{1968}): \emph{\bibinfo{title}{Multi-Tape and Multi-Head
  Pushdown Automata}}.
\newblock {\sl \bibinfo{journal}{Inform. Control}} \bibinfo{volume}{13}, pp.
  \bibinfo{pages}{433--470}, \doi{10.1016/S0019-9958(68)90901-7}.

\bibitemdeclare{book}{Hopcroft:1979:itatlc:book}
\bibitem{Hopcroft:1979:itatlc:book}
\bibinfo{author}{John~E. \surnamestart Hopcroft\surnameend} \&
  \bibinfo{author}{Jeffrey~D. \surnamestart Ullman\surnameend}
  (\bibinfo{year}{1979}): \emph{\bibinfo{title}{Introduction to Automata
  Theory, Languages, and Computation}}.
\newblock \bibinfo{publisher}{Addison-Wesley}, \bibinfo{address}{Reading,
  Massachusetts}.

\bibitemdeclare{article}{kutrib:2015:idqaftdcp}
\bibitem{kutrib:2015:idqaftdcp}
\bibinfo{author}{Martin \surnamestart Kutrib\surnameend},
  \bibinfo{author}{Andreas \surnamestart Malcher\surnameend},
  \bibinfo{author}{Carlo \surnamestart Mereghetti\surnameend},
  \bibinfo{author}{Beatrice \surnamestart Palano\surnameend} \&
  \bibinfo{author}{Matthias \surnamestart Wendlandt\surnameend}
  (\bibinfo{year}{2015}): \emph{\bibinfo{title}{Deterministic Input-Driven
  Queue Automata: Finite Turns, Decidability, and Closure Properties}}.
\newblock {\sl \bibinfo{journal}{Theoret. Comput. Sci.}} \bibinfo{volume}{578},
  pp. \bibinfo{pages}{58--71}, \doi{10.1016/j.tcs.2015.01.012}.

\bibitemdeclare{article}{McNaughton:1988:crtsfl}
\bibitem{McNaughton:1988:crtsfl}
\bibinfo{author}{Robert \surnamestart McNaughton\surnameend},
  \bibinfo{author}{Paliath \surnamestart Narendran\surnameend} \&
  \bibinfo{author}{Friedrich \surnamestart Otto\surnameend}
  (\bibinfo{year}{1988}): \emph{\bibinfo{title}{{C}hurch-{R}osser {T}hue
  Systems and Formal Languages}}.
\newblock {\sl \bibinfo{journal}{J. ACM}} \bibinfo{volume}{35}, pp.
  \bibinfo{pages}{324--344}, \doi{10.1145/42282.42284}.

\bibitemdeclare{inproceedings}{mehlhorn:1980:pmradcflr}
\bibitem{mehlhorn:1980:pmradcflr}
\bibinfo{author}{Kurt \surnamestart Mehlhorn\surnameend}
  (\bibinfo{year}{1980}): \emph{\bibinfo{title}{Pebbling Moutain Ranges and its
  Application of {DCFL}-Recognition}}.
\newblock In \bibinfo{editor}{J.~W. \surnamestart de~Bakker\surnameend} \&
  \bibinfo{editor}{Jan \surnamestart van Leeuwen\surnameend}, editors: {\sl
  \bibinfo{booktitle}{International Colloquium on Automata, Languages and
  Programming (ICALP 1980)}}, {\sl
  \bibinfo{series}{LNCS}}~\bibinfo{volume}{85}, \bibinfo{publisher}{Springer},
  pp. \bibinfo{pages}{422--435}, \doi{10.1007/3-540-10003-2_89}.

\bibitemdeclare{article}{nagy:2012:cothfall}
\bibitem{nagy:2012:cothfall}
\bibinfo{author}{Benedek \surnamestart Nagy\surnameend} (\bibinfo{year}{2012}):
  \emph{\bibinfo{title}{A class of 2-head finite automata for linear
  languages}}.
\newblock {\sl \bibinfo{journal}{Triangle 8, (Languages, Mathematical
  Approaches)}}, pp. \bibinfo{pages}{89--99}.

\bibitemdeclare{inproceedings}{Nagy:2015:afothpa}
\bibitem{Nagy:2015:afothpa}
\bibinfo{author}{Benedek \surnamestart Nagy\surnameend} (\bibinfo{year}{2015}):
  \emph{\bibinfo{title}{A family of two-head pushdown automata}}.
\newblock In \bibinfo{editor}{Rudolf \surnamestart Freund\surnameend},
  \bibinfo{editor}{Markus \surnamestart Holzer\surnameend},
  \bibinfo{editor}{Nelma \surnamestart Moreira\surnameend} \&
  \bibinfo{editor}{Rog{\'{e}}rio \surnamestart Reis\surnameend}, editors: {\sl
  \bibinfo{booktitle}{Non-Classical Models of Automata and Applications (NCMA
  2015)}}, {\sl \bibinfo{series}{books@ocg.at}} \bibinfo{volume}{318},
  \bibinfo{publisher}{Austrian Computer Society}, \bibinfo{address}{Vienna},
  pp. \bibinfo{pages}{177--191}.

\bibitemdeclare{article}{Niemann:2005:crldvgcsl}
\bibitem{Niemann:2005:crldvgcsl}
\bibinfo{author}{G.~\surnamestart Niemann\surnameend} \&
  \bibinfo{author}{F.~\surnamestart Otto\surnameend} (\bibinfo{year}{2005}):
  \emph{\bibinfo{title}{The {C}hurch-{R}osser languages are the deterministic
  variants of the growing context-sensitive languages}}.
\newblock {\sl \bibinfo{journal}{Inform. Comput.}} \bibinfo{volume}{197}, pp.
  \bibinfo{pages}{1--21}, \doi{10.1016/j.ic.2004.09.003}.

\bibitemdeclare{article}{rosenberg:1967:mrlcfl}
\bibitem{rosenberg:1967:mrlcfl}
\bibinfo{author}{Arnold~L. \surnamestart Rosenberg\surnameend}
  (\bibinfo{year}{1967}): \emph{\bibinfo{title}{A Machine Realization of the
  Linear Context-Free Languages}}.
\newblock {\sl \bibinfo{journal}{Inform. Control}} \bibinfo{volume}{10}, pp.
  \bibinfo{pages}{175--188}, \doi{10.1016/S0019-9958(67)80006-8}.

\end{thebibliography}

\end{document}